\newcommand{\indp}{\perp\!\!\!\perp}
\newcommand{\Ssc}{\mathcal{S}}  
\newcommand{\Tsc}{\mathcal{T}}  
\newcommand{\trans}{^\mathsf{T}}  
\newcommand{\p}[1]{\left(#1\right)}
\newcommand{\sqb}[1]{\left[#1\right]}
\newcommand{\cb}[1]{\left\{#1\right\}}
\newcommand{\EE}[2][]{\mathbb{E}_{#1}\left[#2\right]}
\newcommand{\PP}[2][]{\mathbb{P}_{#1}\left[#2\right]}
\newcommand{\Var}[2][]{\operatorname{Var}_{#1}\left[#2\right]}
\newcommand{\Cov}[2][]{\operatorname{Cov}_{#1}\left[#2\right]}
\newtheorem{theorem}{Theorem}
\newtheorem{assumption}{Assumption}
\newtheorem{example}{Example}
\newtheorem{proposition}{Proposition}
\title{Covariate Shift Corrected Conditional\\Randomization Test}
\author{%
  Bowen Xu\thanks{These authors contributed equally to this work.}\\
  Department of Mathematics\\
  New York University Shanghai\\
  \texttt{bx2038@nyu.edu} \\
  \And
   Yiwen Huang\footnotemark[1] \\
   Department of Statistics\\
   Peking University\\
   \texttt{2000010773@stu.pku.edu.cn} \\
   \AND
   Chuan Hong\\
   Department of Biostatistics and Bioinformatics\\ Duke University\\
   \texttt{chuan.hong@duke.edu} \\
   \And
   Shuangning Li\\
   Department of Statistics\\
   Harvard University\\
   \texttt{shuangningli@fas.harvard.edu}\\
   \And
   Molei Liu\thanks{Corresponding author. To whome correspondence should be addressed.}\\
   Department of Biostatistics\\
   Columbia Mailman School of Public Health\\
   \texttt{ml4890@cumc.columbia.edu}\\
}
\begin{document}

\maketitle

\begin{abstract}
Conditional independence tests are crucial across various disciplines in determining the independence of an outcome variable $Y$ from a treatment variable $X$, conditioning on a set of confounders $Z$. The Conditional Randomization Test (CRT) offers a powerful framework for such testing by assuming known distributions of $X \mid Z$; it controls the Type-I error exactly, allowing for the use of flexible, black-box test statistics. In practice, testing for conditional independence often involves using data from a source population to draw conclusions about a target population. This can be challenging due to covariate shift---differences in the distribution of $X$, $Z$, and surrogate variables, which can affect the conditional distribution of $Y \mid X, Z$---rendering traditional CRT approaches invalid. To address this issue, we propose a novel Covariate Shift Corrected Pearson Chi-squared Conditional Randomization (csPCR) test. This test adapts to covariate shifts by integrating importance weights and employing the control variates method to reduce variance in the test statistics and thus enhance power. Theoretically, we establish that the csPCR test controls the Type-I error asymptotically. Empirically, through simulation studies, we demonstrate that our method not only maintains control over Type-I errors but also exhibits superior power, confirming its efficacy and practical utility in real-world scenarios where covariate shifts are prevalent. Finally, we apply our methodology to a real-world dataset to assess the impact of a COVID-19 treatment on the 90-day mortality rate among patients.
\end{abstract}


\section{Introduction}
Conditional independence tests are important across diverse fields for determining whether an outcome variable $Y$ is independent of a treatment variable $X$, conditioning on a potentially high-dimensional vector of confounding variables $Z$. This type of testing is critical for understanding the complex relationships among variables. For instance, scientists may hope to understand whether a specific genetic feature influences disease outcomes, whether a particular treatment effectively extends life expectancy, or whether certain demographic factors impact college admissions.

Traditionally, these conditional testing problems are approached by modeling $Y$ against $X$ and $Z$ through some parametric or semiparametric model. However, this strategy has been criticized due to potential model misspecification and limited observations of $Y$. As an alternative strategy, the model-X framework and Conditional Randomization Test (CRT) propose testing for the general conditional independence hypothesis $H_0: X \indp Y \mid Z$, free of any specific effect parameters \citep{candes2018panning}. The CRT assumes the distribution of $X \mid Z$ to be known and can control the type-I error exactly, allowing for the choice of any flexible, black-box test statistic. This strategy is particularly useful when there is either strong and reliable scientific knowledge of the distribution of $X \mid Z$ or an auxiliary dataset of $(X,Z)$ of large sample size, known as the semi-supervised setting.

\setlength{\intextsep}{0pt} 
\begin{wrapfigure}{l}{0.35\textwidth}  
    \centering
    \includegraphics[width=0.35\textwidth]{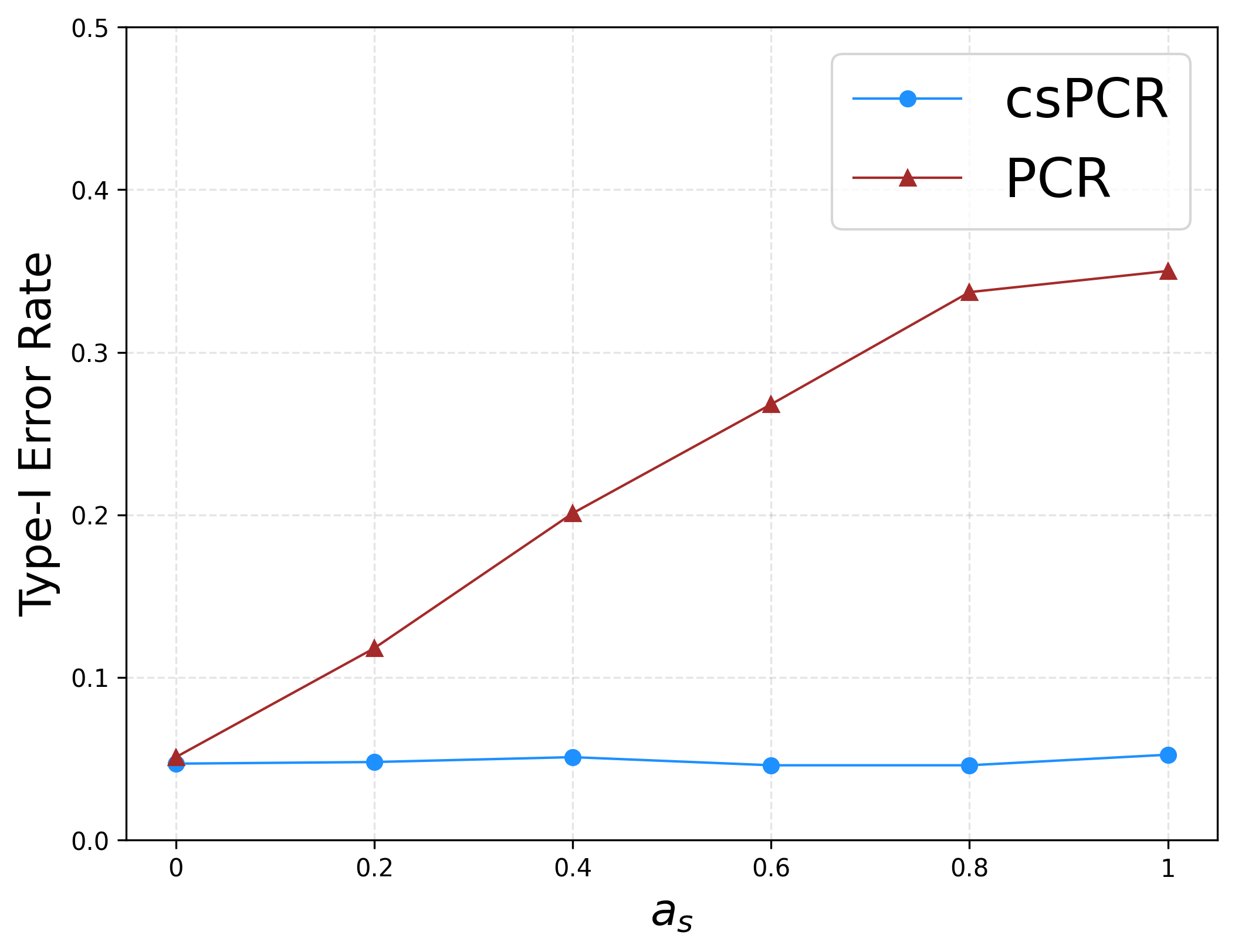}  
    \caption{\small{Type-I Error rates of our proposed csPCR and the source-only PCR on a simulated example. The Type-I error inflation of PCR demonstrates that source analysis is not valid or generalizable on the target due to covariate shift.}}
    \label{fig:motivation}
\end{wrapfigure}
In practice, testing for conditional independence frequently involves using data from a source population to draw conclusions about a target population. This situation presents challenges due to potential differences in the distribution of variables between the two populations. For example, economists may be interested in whether college admission ($Y$) is independent of family income ($X$), conditioning on variables such as GPA, extracurricular activities, geographic location, and other demographics ($Z$). In the source population, the relationship might be influenced by factors like wealthy parents investing in SAT preparation, which boosts admission rates—a relationship that may not exist in a target population where such preparation is less common. Although $Y$ may not appear independent of $X$ given $Z$ in the source population, the conclusion could vary significantly in the target population. This discrepancy underscores the need for a robust and flexible testing procedure that can adapt to shifts in distributions.

More specifically, we address the \textit{covariate shift} scenario, where the distributions of the treatment variables $X$, the confounding variables $Z$, and some surrogate or auxiliary variables $V$ (e.g., SAT scores) may differ between the source and target populations. However, the conditional distribution of $Y$ given $X, Z,$ and $V$ remains the same between them. In such scenarios, our goal is to leverage information from the source to accurately test for conditional independence in the target population without the observation of $Y$ on target.

See Figure \ref{fig:motivation} for an example of the consequences of such covariate shift.

In this paper, we propose a novel conditional independence test suitable for covariate shift scenarios. Our method builds upon the Pearson Chi-Squared Conditional Randomization (PCR) test, a powerful model-X testing procedure that effectively addresses a broader range of alternative $p$-value distributions than the vanilla CRT \citep{javanmard2021pearson}. Methodologically, we make two major contributions. First, we introduce importance weights into the label counting steps of the original PCR test, making the new test valid under covariate shift. These weights adjust the importance of each sample according to its density ratio, effectively rebalancing the source data to match the target population's distribution. Second, we introduce a power enhancement method that employs the control variates method to reduce variance in the test statistics. Although importance weights can increase the variance in test statistics, especially when the density ratio can become extremely high, potentially reducing power, our power enhancement method effectively addresses this issue. Together, these innovations enable us to develop a PCR test that is both powerful and valid under covariate shifts.

The rest of the paper is organized as follows: In Section \ref{section:setup}, we provide a formal introduction to the problem setup. In Section \ref{section:method}, we introduce the proposed Covariate Shift Corrected Pearson Chi-squared Conditional Randomization (csPCR) test and establish that the proposed csPCR test controls the Type-I error asymptotically. In Section \ref{section:simulation}, we demonstrate the empirical performance of the csPCR test through simulation studies. In Section \ref{section:real_data}, we apply the proposed csPCR test to a real-world dataset to assess the impact of a COVID-19 treatment on the 90-day mortality rate among patients.


\subsection{Related Work}

Our work builds upon the model-X framework and the conditional randomization test proposed by \citet{candes2018panning}. The particular method we develop is based on a variant of the vanilla CRT, the Pearson Conditional Randomization (PCR) test \citep{javanmard2021pearson}. Recent advances in the CRT include improving computation time \citep{liu2022fast, nguyen2022conditional}, studying robustness \citep{li2023maxway, niu2022reconciling}, and examining statistical power \citep{wang2022high}. The focus of this paper, different from the above, is on how to build a valid CRT procedure when there is covariate shift. The paper is also complementary to the above literature: for example, we hope that future work can conduct theoretical power analysis for our procedure or develop a double robust version of the procedure just like in \citep{li2023maxway}. Finally, we note that surrogate variables play a crucial role in this paper: because the distribution of the surrogate variables is different in the source and the target population, naively testing the conditional independence hypothesis in the source population can yield invalid conclusions for the target population. Surrogate variables also play an important role in the paper by \citep{li2023maxway}, albeit in a different way, where the surrogate variables are used to learn the distribution of $Y \mid X,Z$ and to further improve the robustness of the CRT procedure.

Statistical learning and inference under covariate shift has been extensively studied over the past years. As a seminal work in addressing covariate shift bias, \cite{huang2006correcting} proposed a density ratio weighting approach using kernel mean matching to characterize the adjusting weights. Their key idea of importance (re)weighting is intrinsically connected with early work in broader contexts like importance sampling \citep[e.g.]{rubin1987calculation} and semiparametric inference \citep[e.g.]{robins1994estimation}. \cite{liu2023augmented} extended this idea to a doubly robust framework accommodating surrogate variables like $V$ and being more robust to the misspecification or poor quality of the density ratio models. \cite{wang2023pseudo} handled a more challenging scenario with severe shift and poor overlap between the source and target populations. Among this track of literature, \cite{thams2023statistical} is the most closely related to our work as they also considered conditional independence testing under distributional shifts and proposed a general testing procedure base on importance sampling (IS) allowing for the use of CRT. Different from us, their work does not accommodate the covariate shifts of some surrogate or auxiliary $V$. Moreover, as will be shown in our numerical studies, their general IS testing strategy can encounter the loss of effective sample sizes and be less powerful than ours.

\section{Problem Setup}
\label{section:setup}

\subsection{Conditional Independence Testing under Covariate Shift}
\label{subsection:covariate_shift}
Let $Y \in \mathbb{R}$ denote the outcome variable, $X \in \mathbb{R}$ the treatment variable, $Z \in \mathbb{R}^p$ a vector of confounding variables, and $V \in \mathbb{R}^d$ a vector of surrogate variables. To make the problem more concrete, consider the following two examples:

\begin{example}[College Admission]
$Y$ is college admission, $X$ is family income, $Z$ includes a number of factors such as GPA, extracurricular activities, geographic location, and demographic information, $V$ is the SAT score.
\end{example}

\begin{example}[Health Outcome]
$Y$ is a long-term health outcome, $X$ is a medical treatment, $Z$ includes factors such as age, gender, and health history, $V$ includes surrogate variables like blood pressure, BMI, and duration of hospital stays post the treatment.
\end{example}

In both examples, the occurrence of the surrogate $V$ is post the baseline of our interests or requiring some decision making. Thus, our goal is still to test $Y\indp X\mid Z$ without including $V$. 

Consider a scenario involving two distinct populations: the source population $\mathcal{S}$ and the target population $\mathcal{T}$. We collect data from the source population with the goal of making inferences about the target population. The source data contains $n$ independent and identically distributed samples of $(Y_i, X_i, Z_{i \cdot}, V_{i \cdot})$ for $i = 1, \dots, n$. Let $\mathbf{y} = (Y_1, Y_2, \ldots, Y_n)^\top \in \mathbb{R}^n$, $\mathbf{x} = (X_1, X_2, \ldots, X_n)^\top \in \mathbb{R}^n$, $\mathbf{Z} = (Z_{1\cdot}, Z_{2\cdot}, \ldots, Z_{n\cdot})^\top \in \mathbb{R}^{n \times p}$, and $\mathbf{V} = (V_{1\cdot}, V_{2\cdot}, \ldots, V_{n\cdot})^\top \in \mathbb{R}^{n \times d}$.
We are interested in testing the following conditional independence hypothesis in the target population:
\begin{equation}
\mathcal{H}_0: X \indp Y \mid Z.
\end{equation}

We assume that the conditional distribution of $Y \mid X, Z, V$ is the same in both populations; however, the distribution of $(X, Z, V)$ varies between $\mathcal{S}$ and $\mathcal{T}$. More precisely, the joint distribution of $Y, X, Z, V$ can be described as follows:
\begin{equation}
\begin{split}
\label{eqn:joint_distribution}
P_{\Ssc}(Y,X,Z,V) &= P_{\Ssc}(X,Z,V) P(Y|X,Z,V) \quad\text{on } \Ssc, \\
P_{\Tsc}(Y,X,Z,V) &= P_{\Tsc}(X,Z,V) P(Y|X,Z,V) \quad\text{on } \Tsc.
\end{split}
\end{equation}
This situation is referred to as the \textit{covariate shift} scenario because the distribution of the covariates $X$, $Z$, and $V$ in the source population $\mathcal{S}$ does not match that in the target population $\mathcal{T}$.

\begin{figure}
    \centering
    \begin{subfigure}{0.45\textwidth}
        \centering
        \includegraphics[width=0.8\textwidth]{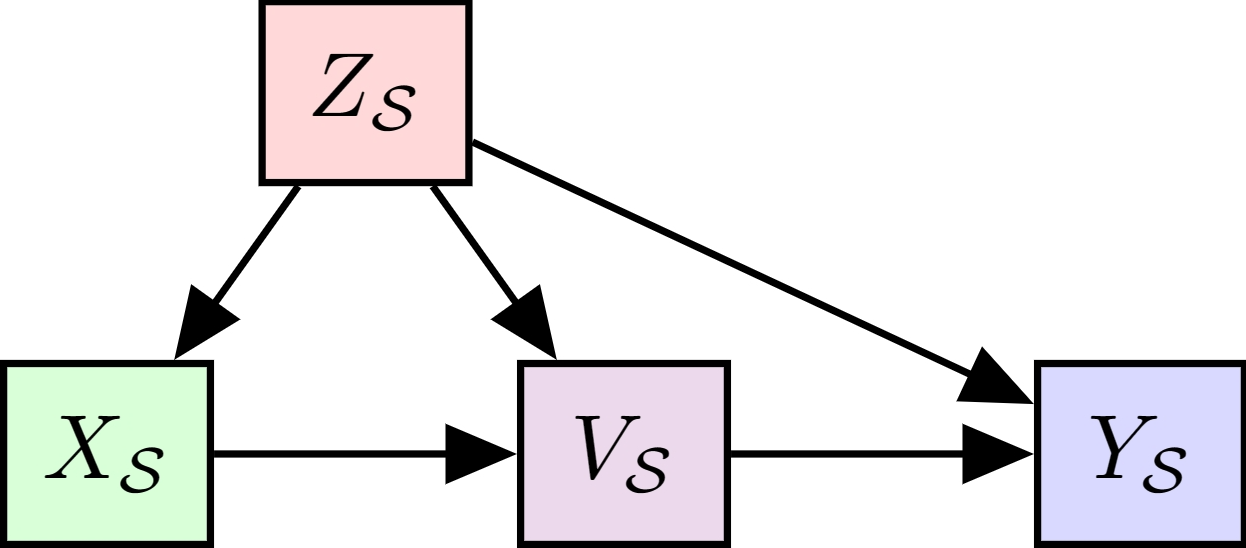} 
        \caption{Source population.}
    \end{subfigure}\hfill
    \begin{subfigure}{0.45\textwidth}
        \centering
        \includegraphics[width=0.8\textwidth]{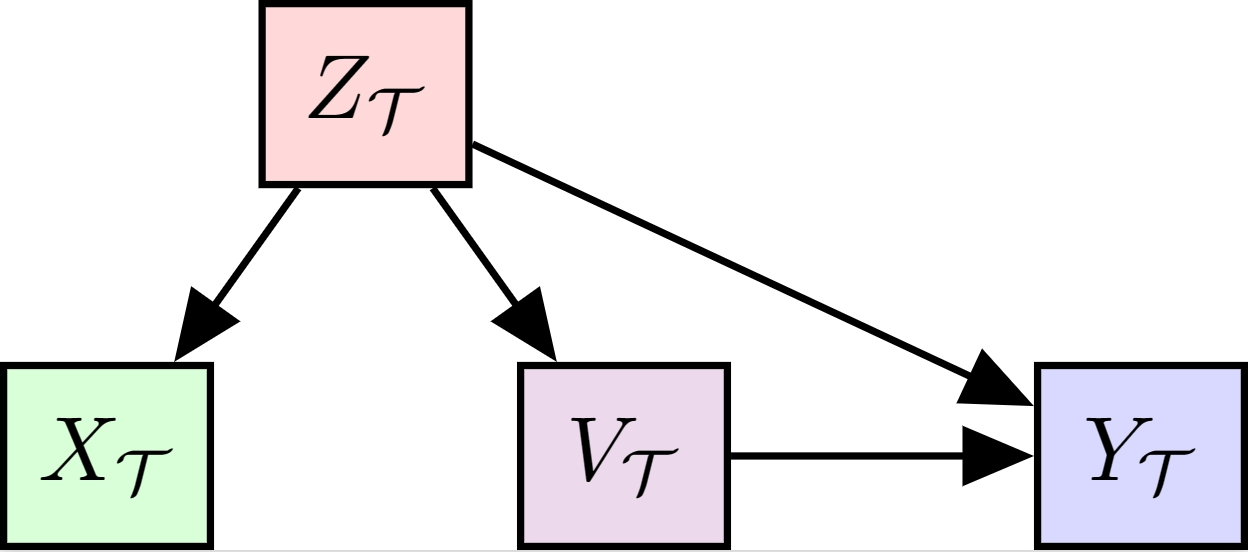} 
        \caption{Target population.}
    \end{subfigure}
    \caption{Direct acyclic graphs illustrating possible differences between the source and the target populations.}
    \label{fig:dag_illustration}
\end{figure}

Let’s understand the above assumption and its implications through the two examples above. In the college admissions example, it is plausible to assume that the rate of college admissions remains consistent across the two populations when conditioned on the SAT score, family income, and other confounding variables. However, the joint distribution of $X, V$ and $Z$ can differ: in the source population, if wealthy parents frequently invest in SAT preparation, boosting admission rates, this relationship may not hold in a target population where such preparation is uncommon. In such cases, it is thus possible that $X \cancel{\indp} Y \mid Z$ in the source population but $X \indp Y \mid Z$ in the target population (see Figure \ref{fig:dag_illustration} for such an example). In the health outcomes example, it is again plausible that the conditional distribution of long-term health outcomes given the treatment variable, confounding variables, and surrogates remains the same across the two populations. However, the assignment of the treatment may depend differently on the surrogate variables across the two populations. Therefore, it’s possible that $X \indp Y \mid Z$ in one population, but not in the other.

In both examples, we can see that the result of naively applying a valid conditional independence test on the source population cannot guarantee a valid conclusion for testing $\mathcal{H}_0$ in the target population. Therefore, we need to develop new tools for addressing covariate shifts in conditional independence tests.

\subsection{Model-X Framework}

In this paper, we operate within the model-X framework, as described by \citet{candes2018panning}, which assumes that the joint distributions of covariates $X, V, Z$ are perfectly known in both the source and target populations. This framework is particularly suited for scenarios where: (1) there is substantial prior domain knowledge about the covariates $X, V,$ and $Z$, or (2) there is a significant amount of unsupervised data for these covariates in both populations, in addition to $n$ labeled observations in the source population, characterizing a semi-supervised setting.

An example of the first scenario can be seen in genetics, where researchers have well-established models for the joint distributions of single nucleotide polymorphisms (SNPs). For the second scenario, consider our earlier example involving health outcomes. Here, the outcome variable $Y$ represents a long-term health outcome that is more costly or sensitive to measure compared to the shorter-term variables $X, V,$ and $Z$. In such cases, the variables $X, V,$ and $Z$ are typically easier and less costly to collect, frequently resulting in a semi-supervised setting in these health-related studies.

\section{Method: Covariate Shift Corrected PCR Test}
\label{section:method}

\subsection{Incorporating the Density Ratio into the PCR Test}

In Section \ref{subsection:covariate_shift}, we discussed how naively applying conditional independence tests to the source data cannot guarantee valid conclusions for the target population. To address this issue, we must incorporate information about the differences between the two populations into our testing procedure. In particular, we will make use of the density ratio defined as:
\begin{equation}
e(X,Z,V) = \frac{P_{\Tsc}(X,Z, V)}{P_{\Ssc}(X,Z, V)}.
\end{equation}
This ratio measures the relative likelihood of observing each combination of variables $(X, Z, V)$ in the target population compared to the source population. By reweighting the data points in the source population using this density ratio, we effectively transform the source distribution to match the distribution of the target population, thereby addressing the covariate shift problem.

More specifically, we build our method upon the recently proposed Pearson Chi-Squared Conditional Randomization (PCR) test \citep{javanmard2021pearson}. Compared to the vanilla CRT, the PCR test is designed to be more powerful across a broader range of alternative $p$-value distributions. At a high level, the PCR test assigns a label to each data point following a counterfeit sampling step and a subsequent score computation step. Under the null hypothesis that $X \indp Y \mid Z$, the distribution of these labels should be uniform across all possible labels. The PCR test then rejects the null hypothesis if the empirical distribution of the labels deviates significantly from uniformity, as determined by a Pearson's chi-squared test.

Under distributional shift, if the data points were sampled from the target population, then the distribution of the labels would be uniform. However, since the data points are actually sampled from the source population, they must be reweighted using the density ratio. More specifically, in the final step of the PCR test, where the Pearson's chi-squared test is applied, we consider not the count of data points for each label, but the sum of the density ratios of the data points for each label instead. Under the null hypothesis, each sum should approximate $n/L$, where $L$ is the total number of labels. Consequently, we modify the Pearson's chi-squared test to determine whether these weighted sums deviate significantly from $n/L$.

\begin{algorithm}[t]
\caption{\label{alg:pcr} Covariate Shift Corrected PCR (csPCR) Test.}

\begin{algorithmic}[1]

\REQUIRE Data $D_{\Tsc} = (\mathbf{y}, \mathbf{x}, \mathbf{Z}, \mathbf{V})$, the density ratio $e$, the test statistics $T$, integers $K,L\geq 1$, and the significance level $\alpha$.

\STATE Take $M = KL - 1$. 
\FOR{each data point $j = 1$ to $n$}

\STATE Draw $M$ i.i.d samples $\widetilde{X}_j^{(1)},\ldots,\widetilde{X}_j^{(M)}$ from $P_{\Tsc}(X\mid \mathbf{Z})$.
\STATE Use $T$ to score the initial data point $(X_j,Y_j,Z_j)$ and its $M$ counterfeits $(\widetilde{X}_j^{(1:M)},Y_j,Z_j)$
\begin{equation}
\begin{split}
T_j &= T(X_j,Y_j,Z_j)\\
\widetilde{T}_j^{(i)} &= T(\widetilde{X}_j^{(i)},Y_j,Z_j),~\text{for}~i \in \cb{1,\dots,M}.
\end{split}
\end{equation}
\STATE Let $R_j$ denote the rank of $T_j$ among $\{T_j,\widetilde{T}_j^{(1)},\ldots,\widetilde{T}_j^{(M)} \}$, with ties broken randomly.
\STATE Partition $\cb{1,\dots,M+1}=S_1\bigcup\ldots\bigcup S_L$ with $S_\ell:=\{(\ell-1)K+1,\ldots,\ell K\}$. Assign label $\ell_j\in\{1,2,\ldots,L\}$ to sample $j$ if $R_j\in S_{\ell_j}$.
\ENDFOR

\STATE Let $w_j= e(X_j, Z_j, V_j)$ for each $j \in \cb{1,2,\dots,n}$. 
\FOR{each label $\ell\in\{1,2,\ldots,L\}$:}
\STATE Let $W_\ell$ be the sum of $\ell$-labeled importance weights: $W_\ell = \sum\limits_{j=1}^{n}w_j\cdot\mathbbm{1}\{\ell_j=\ell\}$.
\STATE Let $D_\ell$ be the sum of $\ell$-labeled squared importance weights: $D_\ell = \sum\limits_{j=1}^{n}w_j^2\cdot\mathbbm{1}\{\ell_j=\ell\}$.
\ENDFOR

\STATE Let $\hat{\Omega}_n = \frac{L}{n}\text{diag}(D_{1},D_{2},\cdots,D_{L}) -\frac{1}{L}\cdot \mathbf{1}_{L \times L}$.

\STATE Calculate the test statistic $U_{n,L}$ as follows
$
U_{n,L}=\frac{L}{n}\sum\limits_{\ell=1}^{L}\left(W_{\ell}-\frac{n}{L}\right)^2.
$

\ENSURE  Reject the null hypothesis if $U_{n,L} \geq \theta_{\hat{\Omega}_n,\alpha}$; otherwise, accept the null hypothesis. Here, $\theta_{\hat{\Omega}_n,\alpha}$ is the $1-\alpha$ quantile of the distribution $\chi^2_{\hat{\Omega}_n}$, where $A \sim \chi^2_{\Omega}$ denotes that $A = x\trans x$ for $x \sim \mathcal{N}(0, \Omega)$. 

\end{algorithmic}
\end{algorithm}

Based on the above intuition, we propose the Covariate Shift Corrected PCR (csPCR) Test, as outlined in Algorithm \ref{alg:pcr}.

In Algorithm \ref{alg:pcr}, lines 1-7 correspond to those in the original PCR test. These lines initiate the test by generating counterfeit samples \smash{$\tilde{X}_j^{(m)}$}. Assuming the source and target populations were identical, under the null hypothesis, the random variables \smash{$(X_j, Y_j, Z_j), (\tilde{X}_j^{(1)}, Y_j, Z_j), \dots, (\tilde{X}_j^{(M)}, Y_j, Z_j)$} would be exchangeable. Consequently, the rank $R_j$ would be uniformly distributed over $\cb{1, \dots, M+1}$ in the absence of ties, leading to a uniform distribution of the labels as well.

Lines 8-10 in Algorithm \ref{alg:pcr} address the covariate shift by incorporating density ratios as importance weights into $W_j$. Due to this redefinition of $W_\ell$, the null distribution of the final test statistic $U_{n,L}$ is also different. Therefore, we also adjust the rejection threshold from the quantile of a chi-squared distribution, as in the original PCR test, to the quantile of the weighted sum of chi-squared distributions.

\subsection{Power Enhancement}

To effectively address covariate shift, incorporating density ratios as importance weights into the PCR test is essential. However, when these ratios become large, they can increase the variance of the statistics $W_l$. This elevated variance can diminish the test’s power. Therefore, developing methods to reduce this variance is crucial for maintaining the power of the test.

\begin{algorithm}[t]
\caption{\label{alg:pcr_power_enhancement} Covariate Shift Corrected PCR Test with Power Enhancement.}

\begin{algorithmic}[1]

\REQUIRE Data $D_{\Tsc} = (\mathbf{y}, \mathbf{x}, \mathbf{Z}, \mathbf{V})$, the density ratio $e$, the test statistics $T$, the control variate function $a$, integers $K,L\geq 1$, and the significance level $\alpha$.

\FOR{each data point $j = 1$ to $n$}

\STATE Compute the labels $\ell_{j}$ as in Algorithm \ref{alg:pcr}.
\ENDFOR

\STATE Let $w_j= e(X_j, Z_j, V_j)$ for each $j \in \cb{1,2,\dots,n}$. 
\FOR{each label $\ell\in\{1,2,\ldots,L\}$:}
\STATE Compute $\hat{\gamma}_{\ell}$, the regression coefficient obtained by a weighted linear regression of the indicator function $\cb{\mathbbm{1}{\{\ell_{j} = \ell\}}}_{j=1}^n$ on the control variate $\cb{a(X_j, Z_j, V_j)}_{j=1}^n$ with weights \smash{$\cb{w_j}_{j=1}^n$}.
\STATE Compute the augmented version of ${W}_\ell$ as 
\[
\widetilde{W}_\ell=\sum\limits_{j=1}^{n}w_j\cdot[\mathbbm{1}{\{\ell_j=\ell\}}-\hat{\gamma}_{\ell} a(X_j, Z_j, V_j)]+ n \hat{\gamma}_{\ell} \EE[\Tsc]{a(X, Z, V)}.
\]
\ENDFOR

\STATE Let $\mathbf{W} = \left( w_j\cdot[\mathbbm{1}{\{\ell_j=\ell\}}-\hat{\gamma}_{\ell} a(X_j, Z_j, V_j)] + \hat{\gamma}_{\ell}\EE[\Tsc]{a(X, Z, V)} \right)_{\ell,j}$ for $1 \leq \ell \leq L, \, 1 \leq j \leq n$. 
\STATE Calculate the sample covariance matrix $\widetilde{\Omega}_n = \frac{L}{n}(\mathbf{W}-\frac{1}{L}\cdot \mathbf{1}_{L \times n}) (\mathbf{W}-\frac{1}{L}\cdot \mathbf{1}_{L \times n})\trans$.

\STATE Calculate the test statistic $U_{n,L}$ as follows
$ 
\widetilde{U}_{n,L}=\frac{L}{n}\sum\limits_{\ell=1}^{L}\left(\widetilde{W}_{\ell}-\frac{n}{L}\right)^2.
$

\ENSURE  Reject the null hypothesis if $\widetilde{U}_{n,L} \geq \theta_{\widetilde{\Omega}_n,\alpha}$; otherwise, accept the null hypothesis. Here, $\theta_{\widetilde{\Omega}_n,\alpha}$ is the $1-\alpha$ quantile of the distribution $\chi^2_{\widetilde{\Omega}_n}$, where $A \sim \chi^2_{\Omega}$ denotes that $A = x\trans x$ for $x \sim \mathcal{N}(0, \Omega)$. 

\end{algorithmic}
\end{algorithm}

To this end, we introduce a control variate function $a$, allowing $a(X,Z,V)$ to serve as a control variate in reducing variance in $W_l$ \citep{ross2022simulation}. Specifically, for a chosen $\gamma_{\ell}$, we define 
\begin{equation}
\widetilde{W}_{\ell} = \sum\limits_{j=1}^{n}w_j\cdot[\mathbbm{1}{\{\ell_j=\ell\}}- \gamma_{\ell} a(X_j,Z_j,V_j)]+ n\gamma_{\ell} \EE[\Tsc]{a(X,Z,V)}. 
\end{equation}
We can then use $\widetilde{W}_{\ell}$ instead of $W_{\ell}$ in our algorithm. 

We note that for any arbitrary choice of the function $a$ and the parameter $\gamma_{\ell}$, the expectation of $\widetilde{W}_{\ell}$ would be the same as that of $W_{\ell}$: 
\begin{equation}
\label{eqn:same_expectation}
\begin{split}
    \EE{\widetilde{W}_{\ell}} &= \sum\limits_{j=1}^{n}\EE{w_j\mathbbm{1}{\{\ell_j=\ell\}}} - \sum_{j=1}^n\gamma_{\ell} \EE{w_j a(X_j,Z_j,V_j)}+ n\gamma_{\ell} \EE[\Tsc]{a(X,Z,V)}\\
    & = \EE{W_{\ell}} - n\gamma_{\ell}\p{ \EE[\Ssc]{e(X,Z,V) a(X,Z,V)}- \EE[\Tsc]{a(X,Z,V)}}
    = \EE{W_{\ell}}. 
\end{split}
\end{equation}

Therefore, even if we make a sub-optimal choice of the function $a$ and the parameter $\gamma_{\ell}$ in practice, the resulting test (under certain assumptions) will still remain asymptotically valid (see Section \ref{section:theory} for more details). 

However, for effective variance reduction, it is preferable to have the control covariates $a(X,Z,V)$ well-correlated with the outcome (See Section \ref{section:simulation} for practical discussions on choices of the function $a$). This is quite feasible, especially since the surrogate variable $V$ is likely to be predictive of $Y$.

We would also like to discuss the choice of $\gamma_{\ell}$. According to the control covariate literature, with a fixed function $a$, the optimal choice of $\gamma_{\ell}$ that minimizes variance is given by:
\begin{equation}
\label{eqn:optimal_choice}
\gamma_{\ell} = \frac{\Cov{w_j\mathbbm{1}\cb{\ell_j = \ell}, w_j a(X_j, Z_j, W_j)}}{\Var{w_j a(X_j, Z_j, W_j)}}. 
\end{equation}
This coefficient is also the same as that obtained from a linear regression \citep{ross2022simulation}. Thus, when implementing the algorithm, we take $\gamma_{\ell}$ to be the regression coefficient obtained by running a weighted linear regression of the indicator function $\{\mathbbm{1}\cb{\ell_{j} = \ell}\}_{j=1}^n$ on the control variate $\cb{a(X_j, Z_j, V_j)}_{j=1}^n$ with weights $\cb{w_j}_{j=1}^n$.

We have outlined the new csPCR test, including this power enhancement step, in Algorithm \ref{alg:pcr_power_enhancement}.

\subsection{Theoretical Properties}
\label{section:theory}
In this section, we establish that the proposed tests control the type-I error asymptotically. Furthermore, we show that the power enhancement step effectively reduces the variance of the statistics $W_\ell$, which can typically improve the power. 

\begin{assumption}[Fourth moment]
\label{assu:fourth_moment}
The fourth moment of the density ratio $e(X,Z,V)$ is finite: $\EE[\mathcal{S}]{e(X,Z,V)^4} < \infty$. Furthermore, the fourth moment of product of the density ratio and the control variate function is also finite:
$\EE[\mathcal{S}]{e(X,Z,V)^4 a(X,Z,V)^4} < \infty$.
\end{assumption}

\begin{theorem}[Valid Tests]
\label{theo:valid_test}
Under Assumption \ref{assu:fourth_moment}, assume that the null hypothesis of $X \indp Y \mid Z$ holds in the target population, then
\begin{equation}
    \lim_{n \to \infty} \PP{\textnormal{Algorithm \ref{alg:pcr} rejects}} = \alpha. 
\end{equation}
\begin{equation}
    \lim_{n \to \infty} \PP{\textnormal{Algorithm \ref{alg:pcr_power_enhancement} rejects}} = \alpha. 
\end{equation}
\end{theorem}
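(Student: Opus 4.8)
The plan is to reduce both statements to a multivariate central limit theorem for a suitably centered and scaled version of the label-count vector, followed by the continuous mapping theorem and a Slutsky-type argument for the estimated covariance. First I would fix the randomness of the counterfeits and the labels and condition on them; the key observation is that, under the null $X \indp Y \mid Z$ in the target population, the label assignment $\ell_j$ for sample $j$ is a function only of $(X_j, Y_j, Z_j)$ and the independent counterfeit draws from $P_{\Tsc}(X \mid Z)$, and that the exchangeability argument sketched after Algorithm \ref{alg:pcr} gives $\PP[\Tsc]{\ell_j = \ell \mid Y_j, Z_j, V_j} = 1/L$ for every $\ell$ (in the no-ties case; ties are broken uniformly at random, so this is exact). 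Crucially, under covariate shift the conditional law $P(Y \mid X, Z, V)$ is unchanged, so this fact transfers to what the source samples feed into $W_\ell$ once we attach the weight $w_j = e(X_j, Z_j, V_j)$: we get $\EE[\Ssc]{w_j \mathbbm{1}\{\ell_j = \ell\}} = \EE[\Tsc]{\mathbbm{1}\{\ell_j = \ell\}} = 1/L$, so $\EE{W_\ell} = n/L$ exactly. This is the ``change of measure'' step that makes the centering in $U_{n,L}$ correct.

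Next I would set $\xi_j = (w_j \mathbbm{1}\{\ell_j = 1\} - 1/L, \dots, w_j \mathbbm{1}\{\ell_j = L\} - 1/L)^\top \in \mathbb{R}^L$; these are i.i.d. mean-zero vectors across $j$, and Assumption \ref{assu:fourth_moment} (finite fourth moment of $e$, hence finite second moment of $w_j \mathbbm{1}\{\ell_j = \ell\}$) gives a finite covariance $\Omega := \Cov{\xi_1}$. The multivariate Lindeberg--L\'evy CLT yields $\frac{1}{\sqrt{n}} \sum_j \xi_j \xrightarrow{d} \mathcal{N}(0, \Omega)$, i.e. $\sqrt{L/n}\,(W_\ell - n/L)_\ell \xrightarrow{d} \mathcal{N}(0, L\Omega)$, and I would check that $L\Omega$ is exactly the population version of $\hat\Omega_n$ by computing $\Cov{w_1\mathbbm{1}\{\ell_1=\ell\}, w_1\mathbbm{1}\{\ell_1=\ell'\}}$: the diagonal is $\EE{w_1^2 \mathbbm{1}\{\ell_1=\ell\}} - 1/L^2$, whose scaled sample analogue is $\frac{L}{n} D_\ell$, and the off-diagonal picks up the $-1/L^2$ cross term; matching constants shows $\hat\Omega_n \xrightarrow{p} L\Omega =: \Omega^\star$, with $\hat\Omega_n$'s diagonal estimating $\frac{L}{n}\sum_j \EE{w_j^2\mathbbm{1}\{\ell_j=\ell\}}$ by the law of large numbers (again using the fourth-moment assumption for the variance of $w_j^2\mathbbm{1}\{\ell_j=\ell\}$ to be finite) and the $-\frac{1}{L}\mathbf{1}_{L\times L}$ term being deterministic. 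Then $U_{n,L} = \|\sqrt{L/n}\,(W_\ell - n/L)_\ell\|^2 \xrightarrow{d} x^\top x$ for $x \sim \mathcal{N}(0, \Omega^\star)$, i.e. $U_{n,L} \xrightarrow{d} \chi^2_{\Omega^\star}$. Since $\hat\Omega_n \xrightarrow{p} \Omega^\star$ and the quantile $\theta_{\Omega, \alpha}$ is continuous in $\Omega$ at $\Omega^\star$ (the CDF of $\chi^2_\Omega$ is continuous and strictly increasing near its $1-\alpha$ quantile provided $\Omega^\star \neq 0$; a short nondegeneracy remark handles that $\Omega^\star$ has at least one positive eigenvalue), a standard converging-threshold argument gives $\PP{U_{n,L} \ge \theta_{\hat\Omega_n,\alpha}} \to \PP{\chi^2_{\Omega^\star} \ge \theta_{\Omega^\star,\alpha}} = \alpha$. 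This proves the first display.

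For Algorithm \ref{alg:pcr_power_enhancement} the structure is identical once I replace $\xi_j$ by its augmented version $\tilde\xi_j$ with $\ell$-th coordinate $w_j[\mathbbm{1}\{\ell_j=\ell\} - \gamma_\ell a(X_j,Z_j,V_j)] + \gamma_\ell \EE[\Tsc]{a(X,Z,V)} - 1/L$, where I would first argue with the \emph{population} optimal $\gamma_\ell$ from \eqref{eqn:optimal_choice}: equation \eqref{eqn:same_expectation} already shows $\EE{\tilde W_\ell} = n/L$, so the centering is again exact, Assumption \ref{assu:fourth_moment} (now the term $\EE[\Ssc]{e^4 a^4} < \infty$) gives a finite covariance $\tilde\Omega^\star$, and the same CLT plus continuous mapping gives $\tilde U_{n,L} \xrightarrow{d} \chi^2_{\tilde\Omega^\star}$ with $\tilde\Omega_n \xrightarrow{p} \tilde\Omega^\star$. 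The one extra wrinkle is that the algorithm uses the estimated $\hat\gamma_\ell$ from the weighted least-squares fit, not the population $\gamma_\ell$, and $\EE[\Tsc]{a(X,Z,V)}$ may itself be known exactly (model-X) or, if estimated, contributes lower-order noise; I would show $\hat\gamma_\ell \xrightarrow{p} \gamma_\ell$ by the LLN applied to the weighted normal equations (finite fourth moments again), and then a Slutsky argument absorbs the difference $\hat\gamma_\ell - \gamma_\ell = o_p(1)$ into a term that is $o_p(n^{-1/2})\cdot O_p(\sqrt n) = o_p(1)$ after scaling --- more carefully, $\sqrt{L/n}(\tilde W_\ell^{\hat\gamma} - \tilde W_\ell^{\gamma}) = \sqrt{L/n}(\gamma_\ell - \hat\gamma_\ell)\sum_j w_j[a(X_j,Z_j,V_j) - \EE[\Tsc]{a}]/\sqrt{\cdot}$ which is $o_p(1)$ since the sum is $O_p(\sqrt n)$ and the coefficient is $o_p(1)$.

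The main obstacle is the joint handling of the estimated plug-ins $\hat\gamma_\ell$ and $\hat\Omega_n$ (resp. $\tilde\Omega_n$) in the threshold: one must be careful that the test statistic and the random threshold are \emph{not} asymptotically independent, so the clean route is to show the pair $(U_{n,L}, \hat\Omega_n) \xrightarrow{d} (\chi^2_{\Omega^\star}, \Omega^\star)$ jointly (the second coordinate being a constant makes this automatic from the first plus $\hat\Omega_n \xrightarrow{p}\Omega^\star$), invoke continuity of $(u, \Omega) \mapsto \mathbbm{1}\{u \ge \theta_{\Omega,\alpha}\}$ at a.e.\ point of the limit law, and conclude by the portmanteau/continuous-mapping theorem; the delicate bookkeeping is verifying this continuity, which reduces to the claim that the distribution function of $\chi^2_{\Omega^\star}$ is continuous and strictly increasing in a neighborhood of $\theta_{\Omega^\star,\alpha}$, true as long as $\Omega^\star$ is not the zero matrix, i.e. as long as the labels are not degenerate --- a condition that holds whenever $L \ge 2$ and the weights are not a.s.\ zero. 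A secondary technical point worth flagging is the treatment of ties in the rank $R_j$: since ties are broken uniformly at random and independently, the conditional label probabilities are exactly $1/L$ even with ties, so no extra assumption (e.g. continuity of the test statistic distribution) is needed here.
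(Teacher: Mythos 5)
Your proposal is correct and follows essentially the same route as the paper's proof: the change-of-measure identity $\EE[\Ssc]{w_j\mathbbm{1}\{\ell_j=\ell\}}=1/L$ (the paper's Proposition \ref{prop:pcr}), a multivariate CLT for the weighted label vector giving a $\chi^2_{\Omega}$-type limit, consistency of the plug-in covariance and threshold, and, for Algorithm \ref{alg:pcr_power_enhancement}, first arguing with the population-optimal $\gamma_\ell$ and then absorbing $\hat\gamma_\ell-\gamma_\ell$ via a Slutsky argument exactly as the paper does. Your added remarks on quantile continuity, nondegeneracy of $\Omega$, and tie-breaking are finer-grained than the paper's treatment but do not change the argument.
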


\begin{theorem}[Variance Reduction]
    \label{theo:variance_reduction}
Let $W_l$ be the statistics computed in line 10 in Algorithm \ref{alg:pcr}, and $\widetilde{W}_l$ be the statistics computed in line 7 in Algorithm \ref{alg:pcr_power_enhancement}. Under Assumption \ref{assu:fourth_moment},
\begin{equation}
   \limsup_{n \to \infty} \left(\Var{\widetilde{W}_l}/\Var{W_l}\right) \leq 1. 
\end{equation}
\end{theorem}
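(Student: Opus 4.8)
The plan is to compare the two statistics $W_\ell$ and $\widetilde{W}_\ell$ after centering and rescaling, reducing the claim to a statement about variances of i.i.d.\ sums. Write $W_\ell = \sum_{j=1}^n w_j \mathbbm{1}\{\ell_j = \ell\}$ and note that, because the labels $\ell_j$ are themselves random through the counterfeit sampling step, the summands $w_j \mathbbm{1}\{\ell_j = \ell\}$ are i.i.d.\ across $j$ conditionally on nothing (the whole tuple $(X_j, Y_j, Z_{j\cdot}, V_{j\cdot})$ together with the counterfeits is i.i.d.). Hence $\Var{W_\ell} = n\,\sigma_\ell^2$ where $\sigma_\ell^2 = \Var{w_1 \mathbbm{1}\{\ell_1 = \ell\}}$. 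For $\widetilde{W}_\ell$ the only subtlety is that $\hat\gamma_\ell$ is estimated from the same data; so first I would introduce the \emph{oracle} augmented statistic $\widetilde{W}_\ell^{\,\mathrm{orc}}$ obtained by replacing $\hat\gamma_\ell$ with the population optimizer $\gamma_\ell^\star$ from \eqref{eqn:optimal_choice}. That oracle statistic is again a sum of i.i.d.\ terms $w_j[\mathbbm{1}\{\ell_j=\ell\} - \gamma_\ell^\star a(X_j,Z_{j\cdot},V_{j\cdot})] + \gamma_\ell^\star \EE[\Tsc]{a(X,Z,V)}$, and by \eqref{eqn:same_expectation} it has the same mean as $W_\ell$, so $\Var{\widetilde{W}_\ell^{\,\mathrm{orc}}} = n\,\tilde\sigma_\ell^{2,\mathrm{orc}}$ with $\tilde\sigma_\ell^{2,\mathrm{orc}} \le \sigma_\ell^2$; this inequality is exactly the textbook control-variates fact that the least-squares/optimal coefficient cannot increase variance (it equals $\sigma_\ell^2(1-\rho^2)$ where $\rho$ is the relevant correlation, which is finite and well defined under Assumption~\ref{assu:fourth_moment}). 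So at the oracle level the ratio is $\le 1$ exactly, not just in the limit.

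Next I would control the gap between $\widetilde{W}_\ell$ and $\widetilde{W}_\ell^{\,\mathrm{orc}}$. We have
\begin{equation}
\widetilde{W}_\ell - \widetilde{W}_\ell^{\,\mathrm{orc}} = (\hat\gamma_\ell - \gamma_\ell^\star)\Bigl(n\,\EE[\Tsc]{a(X,Z,V)} - \sum_{j=1}^n w_j a(X_j,Z_{j\cdot},V_{j\cdot})\Bigr).
\end{equation}
Under Assumption~\ref{assu:fourth_moment} the finite fourth moment of $e\cdot a$ gives $\sum_j w_j a(\cdot) = n\,\EE[\Ssc]{e\,a} + O_p(\sqrt n) = n\,\EE[\Tsc]{a} + O_p(\sqrt n)$, so the second factor is $O_p(\sqrt n)$. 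The weighted-least-squares coefficient $\hat\gamma_\ell$ is a ratio of sample moments that are averages of i.i.d.\ terms with finite variance (again by Assumption~\ref{assu:fourth_moment}, since $\mathbbm{1}\{\ell_j=\ell\}$ is bounded and the denominator $\frac1n\sum_j w_j^2 a(\cdot)^2$ has a positive limit), so $\hat\gamma_\ell - \gamma_\ell^\star = O_p(n^{-1/2})$. Therefore $\widetilde{W}_\ell - \widetilde{W}_\ell^{\,\mathrm{orc}} = O_p(1)$, i.e.\ $\Var{\widetilde{W}_\ell - \widetilde{W}_\ell^{\,\mathrm{orc}}} = o(n)$ once I upgrade the $O_p$ bounds to $L^2$ (or $L^{1+\epsilon}$) bounds, which the fourth-moment assumption affords. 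Then $\Var{\widetilde{W}_\ell} = \Var{\widetilde{W}_\ell^{\,\mathrm{orc}}} + o(n) = n\,\tilde\sigma_\ell^{2,\mathrm{orc}} + o(n)$, and dividing by $\Var{W_\ell} = n\,\sigma_\ell^2$ and taking $\limsup$ yields $\tilde\sigma_\ell^{2,\mathrm{orc}}/\sigma_\ell^2 \le 1$.

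The main obstacle I anticipate is the passage from in-probability to $L^2$ control of the estimation error $\widetilde{W}_\ell - \widetilde{W}_\ell^{\,\mathrm{orc}}$: the product $(\hat\gamma_\ell - \gamma_\ell^\star)\times(\text{centered sum})$ involves a ratio (through $\hat\gamma_\ell$) whose denominator could in principle be small, so a crude second-moment bound does not immediately go through. I would handle this by a standard truncation / good-event argument --- on the event that the denominator $\frac1n\sum_j w_j^2 a(X_j,Z_{j\cdot},V_{j\cdot})^2$ is bounded below by half its limit (an event of probability $1 - o(1)$ with a tail controlled via the fourth-moment assumption and Chebyshev), $\hat\gamma_\ell - \gamma_\ell^\star$ is a genuine average-of-i.i.d.\ with $O(n^{-1})$ variance, and the Cauchy--Schwarz product of two $O(\sqrt n)$-in-$L^2$ quantities is $O(n)$ rather than $o(n)$; to get $o(n)$ I instead pair the $O(n^{-1/2})$-in-$L^2$ factor $\hat\gamma_\ell - \gamma_\ell^\star$ with the $O(n^{1/2})$-in-$L^2$ centered sum to get $O(1)$ in $L^1$, then bootstrap to $L^{1+\epsilon}$ using a slightly higher moment bought from Assumption~\ref{assu:fourth_moment} via interpolation. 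A cleaner alternative, which I would present if the moment bookkeeping becomes heavy, is to note that $\widetilde{U}_{n,L}$ only depends on $\widetilde{W}_\ell$ through $\sqrt{L/n}(\widetilde W_\ell - n/L)$, so it suffices to show $\sqrt{L/n}(\widetilde W_\ell - \widetilde W_\ell^{\,\mathrm{orc}}) \xrightarrow{p} 0$ together with uniform integrability of the squared normalized statistics --- both of which follow from the same moment inputs --- and this is arguably the more natural route since it parallels exactly what is needed for Theorem~\ref{theo:valid_test}.
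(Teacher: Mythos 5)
Your proposal takes essentially the same route as the paper's proof: the paper likewise introduces the oracle augmented statistic built with the population-optimal coefficient $\gamma_\ell$ from \eqref{eqn:optimal_choice}, invokes the standard control-variates inequality to bound its variance by $\Var{W_\ell}$, writes $\widetilde{W}_\ell$ as that oracle statistic minus $(\hat\gamma_\ell-\gamma_\ell)\sum_j H_j$ with $H_j = w_j a(X_j,Z_j,V_j)-\EE[\Tsc]{a(X,Z,V)}$, shows via Cauchy--Schwarz that this correction contributes only $o(n)$ to the variance, and concludes by dividing by $\Var{W_\ell}=\Omega(n)$. The only difference is the treatment of the in-probability-to-$L^2$ upgrade, which you flag as the main obstacle and propose to handle by truncation/good-event arguments, whereas the paper dispatches it more briskly using $\hat\gamma_\ell-\gamma_\ell\stackrel{p}{\to}0$, the fourth-moment assumption, and dominated convergence.
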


\section{Numerical Simulation}
\label{section:simulation}
In this section, we present simulation studies to assess the performance of our proposed csPCR method and its power enhancement version denoted csPCR(pe), and compare them to a benchmark method. The benchmark method adopted is an importance-resampling based method \citep{thams2023statistical}, denoted as the IS method. We use a significance level of $\alpha = 0.05$.

\subsection{Simulation Setup}
We consider a semi-supervised setting where we have a large volume of unlabeled data of $(X_j, Z_j, V_j)$ from both the source and target populations. In addition, we have a small number of labeled data of $(Y_j, X_j, Z_j, V_j)$ from the source population.

We separate confounding variables $Z$ into two sets: $Z = (Z_{\operatorname{r}}, Z_{\operatorname{null}})$, where $Z_{\operatorname{r}}$ is the relevant set and $Z_{\operatorname{null}}$ is the null set. The relevant confounding variables $Z_{\operatorname{r}}$ are generated as i.i.d. multivariate normal, with mean 0 for the source population and 1 for the target population to simulate the distributional shift in $Z$, where $Z_{\operatorname{r}} \in \mathbb{R}^{p}$ and we set $p=5$. Null confounding variables $Z_{\operatorname{null}}$ are generated independently with no correlation to other variables, modeled as $\mathcal{N}(0.1, I_{q})$ with $q = 50$ for sparse high-dimensional settings in both populations.

The treatment variable $X$ and the surrogate variable $V$ are conditionally generated based on $Z$. Specifically, $X$ is modeled identically across both the source and target populations as $\mathcal{N}(u^\top Z_{\operatorname{r}}, 1)$, where $u$ is a predefined parameter vector that remains the same for both populations.

For $V$, it is modeled differently in the two populations, represented as $\mathcal{N}(v_{\Ssc/\Tsc}^\top Z_{\operatorname{r}} + (1-\theta)a_{\Ssc/\Tsc}X + \theta a_{\Ssc/\Tsc} \sin(X), 1)$. Here, $v_{\Ssc}$ and $v_{\Tsc}$ are predefined parameter vectors for the source and target populations, respectively. The parameter $a$ varies between populations ($a_{\Ssc}$ for the source and $a_{\Tsc}$ for the target), controlling the effect of $X$ on $V$, modeling the indirect effect. The factor $\theta$ modulates the nonlinear component of this relationship.

The outcome variable $Y$ is generated for both populations using the same conditional model over $(X, Z, V)$:
\[ Y|(X, Z, V)_{\Ssc/\Tsc} \sim \mathcal{N}((v\trans Z_{\operatorname{r}})^2 + \beta V + \gamma X, 1), \]
where $\beta$ and $\gamma$ control the effects of $V$ (indirect) and $X$ (direct) on $Y$, respectively.

We generate 1000 unlabeled source and target samples to estimate the density ratio and generate 500 labeled source samples for testing. Moreover, in the simulation, we assume we have full knowledge of the joint distribution of $(X, Z)$ and estimate $V|X, Z$ using an Elastic net regression model with 5-fold cross-validation \cite{zou2005regularization}. For the test statistic $T$ in the algorithm, we choose a simple function $T(\Tilde{X}, Z, V, Y) = Y \cdot \Tilde{X}$. For each parameter iteration, we conduct 1000 Monte Carlo simulations to estimate the Type-I error and power. We estimate the covariance matrix of the sequence of $W_i$'s using the Monte Carlo method and use the momentchi2 package \citep{bodenham2016comparison} for calculating the $p$-value. Additionally, we empirically choose the best hyperparameter $L = 3$ for all our experiments through additional experiments shown in Appendix \ref{App:Choose L}.

\subsection{Simulation Results}

\setlength{\intextsep}{0pt}
\begin{wrapfigure}{l}{0.6\textwidth} 
    \centering
    \begin{subfigure}{0.29\textwidth}
        \centering
        \includegraphics[width=\textwidth]{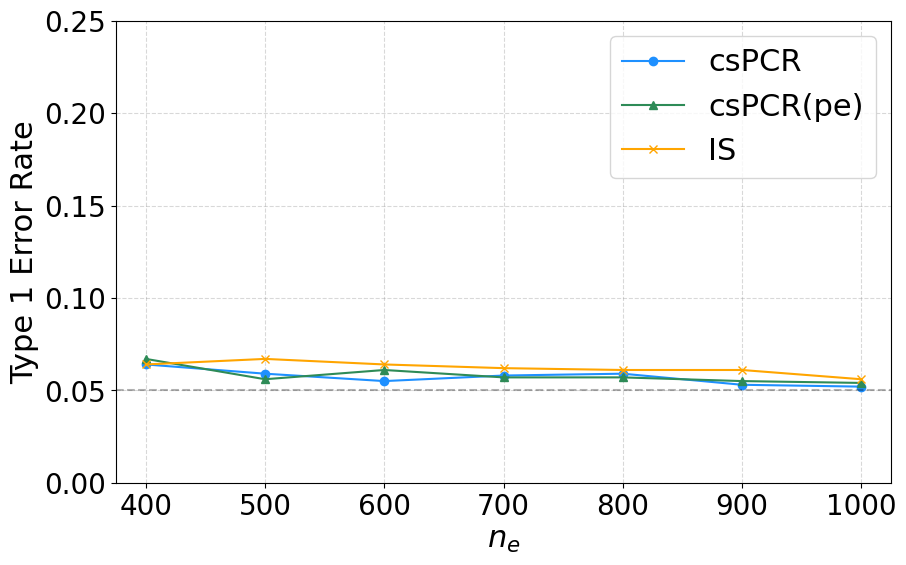}
        
        \label{fig:size_sample_size}
    \end{subfigure}
    \begin{subfigure}{0.29\textwidth}
        \centering
        \includegraphics[width=\textwidth]{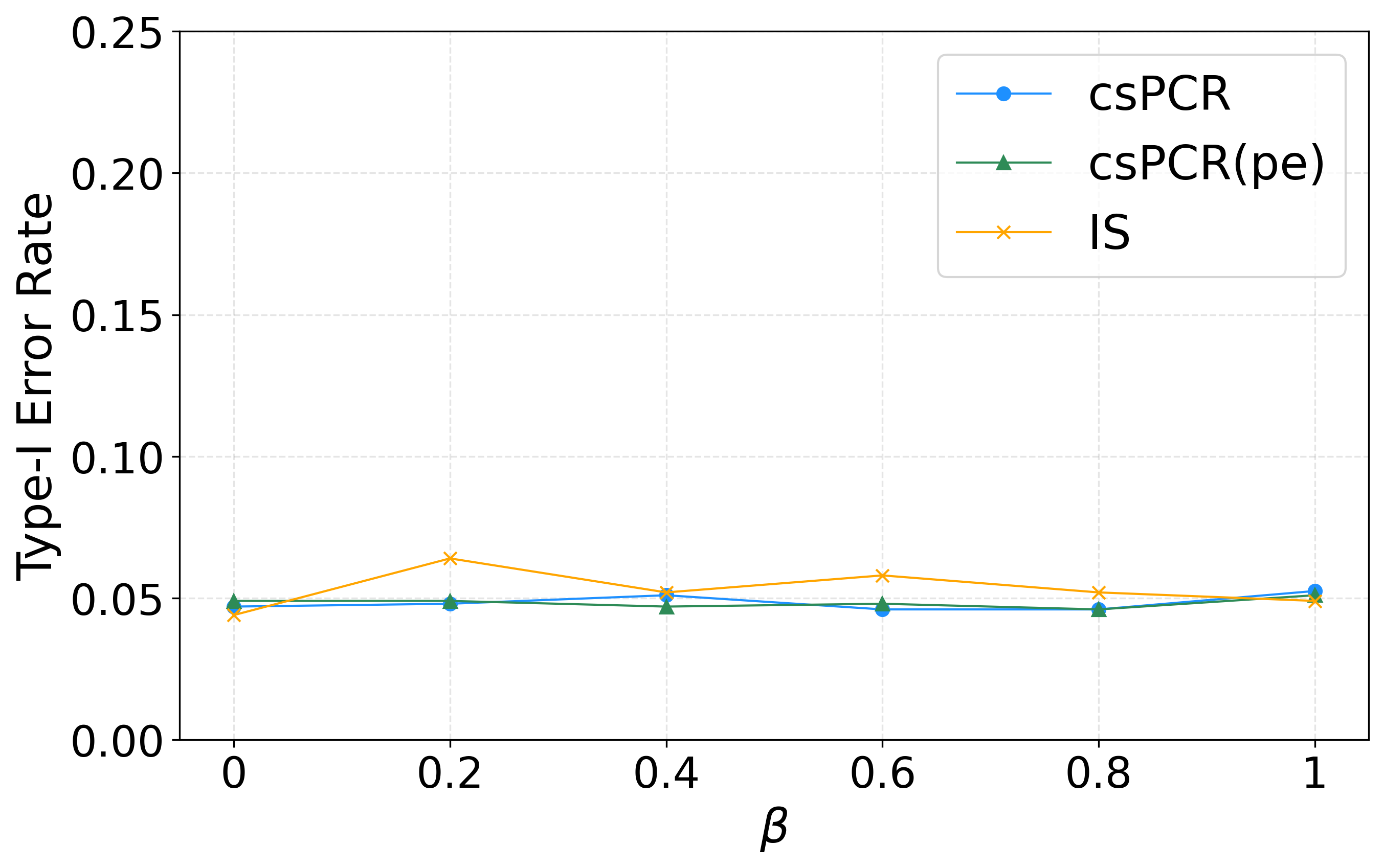}
       
        \label{fig:size_eff}
    \end{subfigure}
    \caption{\small{Comparison of Type-I error control across three methods.}}
    \label{fig:TypeI err control}
\end{wrapfigure}

In Figure~\ref{fig:TypeI err control}, we choose $a_{\Ssc} = 1$ and $a_{\Tsc} = 0$ to compare the Type-I error control of our methods with the benchmark. The left panel shows the Type-I error rate as the sample size of the data used to estimate the density ratio, $n_e$, varies from small to large. There appears to be a slight Type-I error inflation for all three methods when the sample size $n_e$ is small, but the Type-I error quickly converges to the ideal level of 0.05 as $n_e$ grows larger. Moreover, our methods show more stable Type-I error control than the benchmark method when the estimation sample size is low. The right panel shows that when the density ratio is well approximated, all three methods attain good Type-I error control regardless of the change in $\beta$, i.e., the strength of the indirect effect, but the csPCR and csPCR(pe) methods have more stable control.

To evaluate the statistical power of our csPCR test, we choose $a_{\Ssc} = 0$ and $a_{\Tsc} = 2$, so that the null hypothesis holds true in the target population but not in the source population. As Figure~\ref{fig:power-IndEffect} shows, both the csPCR and the csPCR(pe) methods have uniformly higher power than the benchmark method as we vary the indirect effect size $\beta$. For example, when $\beta = 1.4$, the benchmark IS method has a power of 0.33, the csPCR method has a power of 0.44, and the csPCR(pe) method can attain a power of 0.8.

When we fix the indirect effect $\beta=2$ and vary the direct effect of $X$ ($\gamma$), as shown in Figure~\ref{fig:power-dirEff}, our methods still exceed the benchmark, and the power enhancement significantly improves the original version of the test. For example, when $\gamma = 1$, the benchmark IS method has a power of 0.4, the csPCR method has a power of 0.62, and the csPCR(pe) method can attain a power of 0.86.

\setlength{\intextsep}{10pt}
\begin{figure}[htbp]
    \centering
    \begin{subfigure}{0.32\textwidth}
        \centering
        \includegraphics[width=\textwidth]{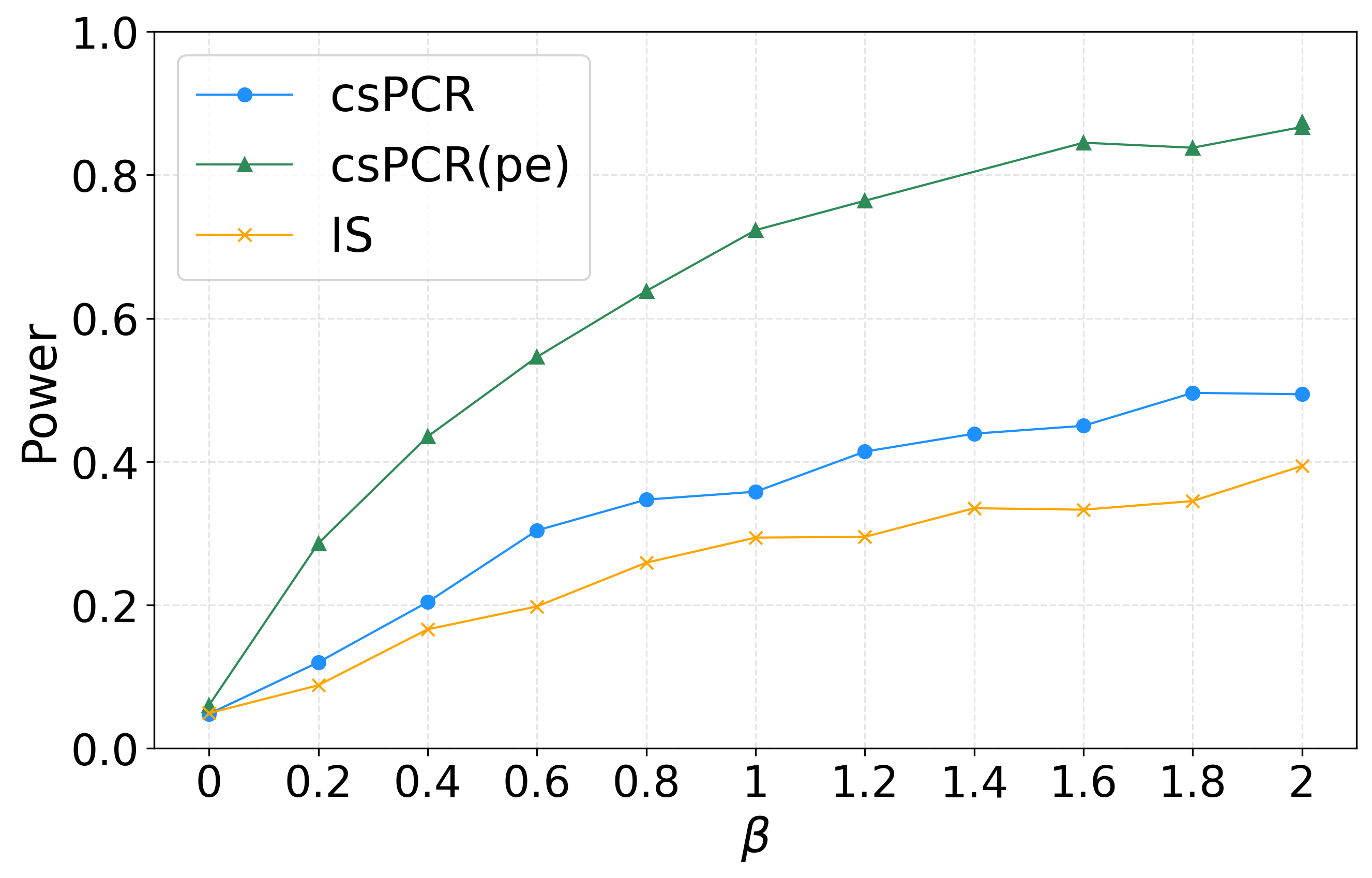}
        \caption{Increasing indirect effect $\beta$}
        \label{fig:power-IndEffect}
    \end{subfigure}\hfill
    \begin{subfigure}{0.32\textwidth}
        \centering
        \includegraphics[width=\textwidth]{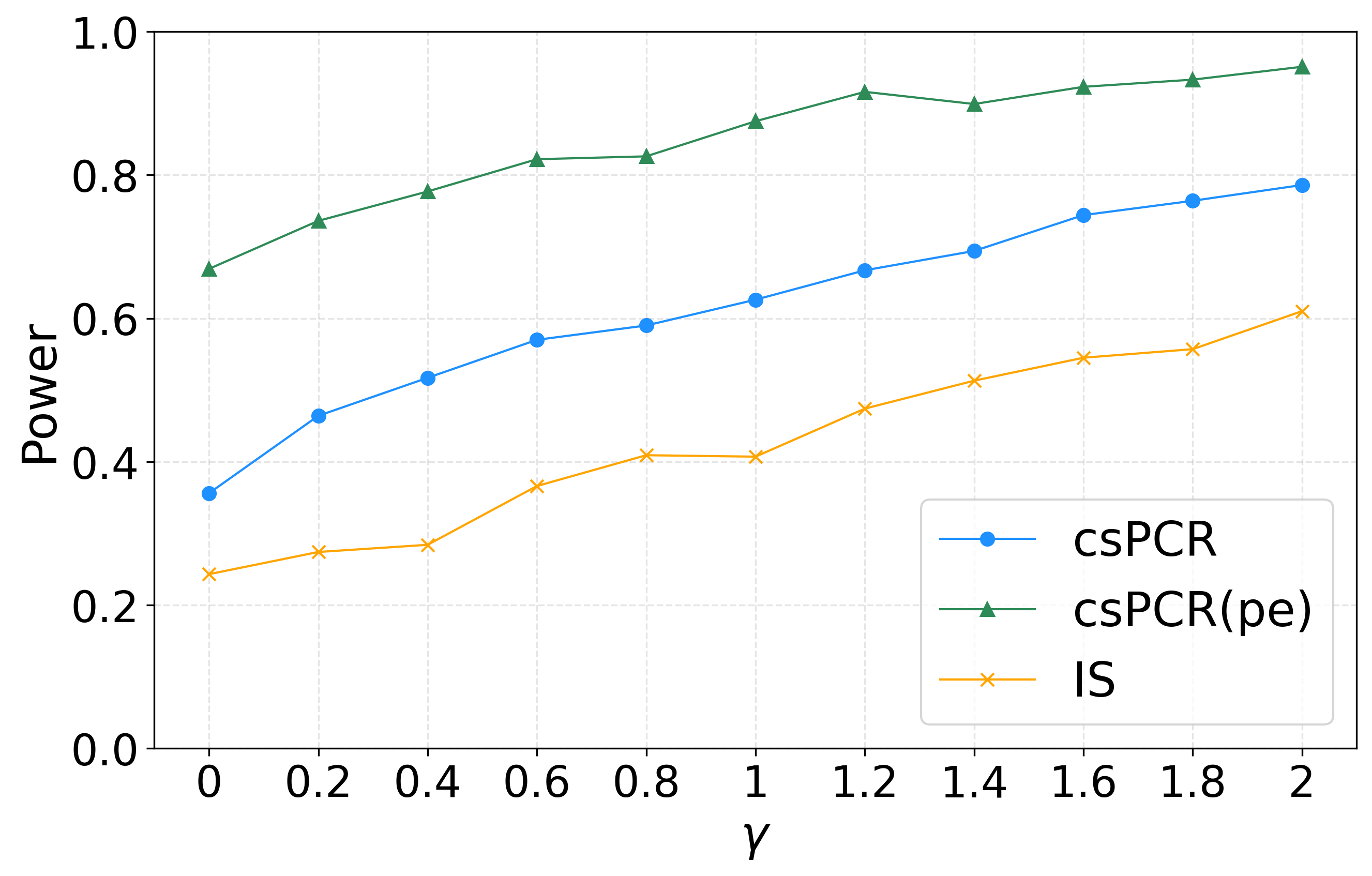}
        \caption{Increasing direct effect $\gamma$}
        \label{fig:power-dirEff}
    \end{subfigure}\hfill
    \begin{subfigure}{0.32\textwidth}
        \centering
        \includegraphics[width=\textwidth]{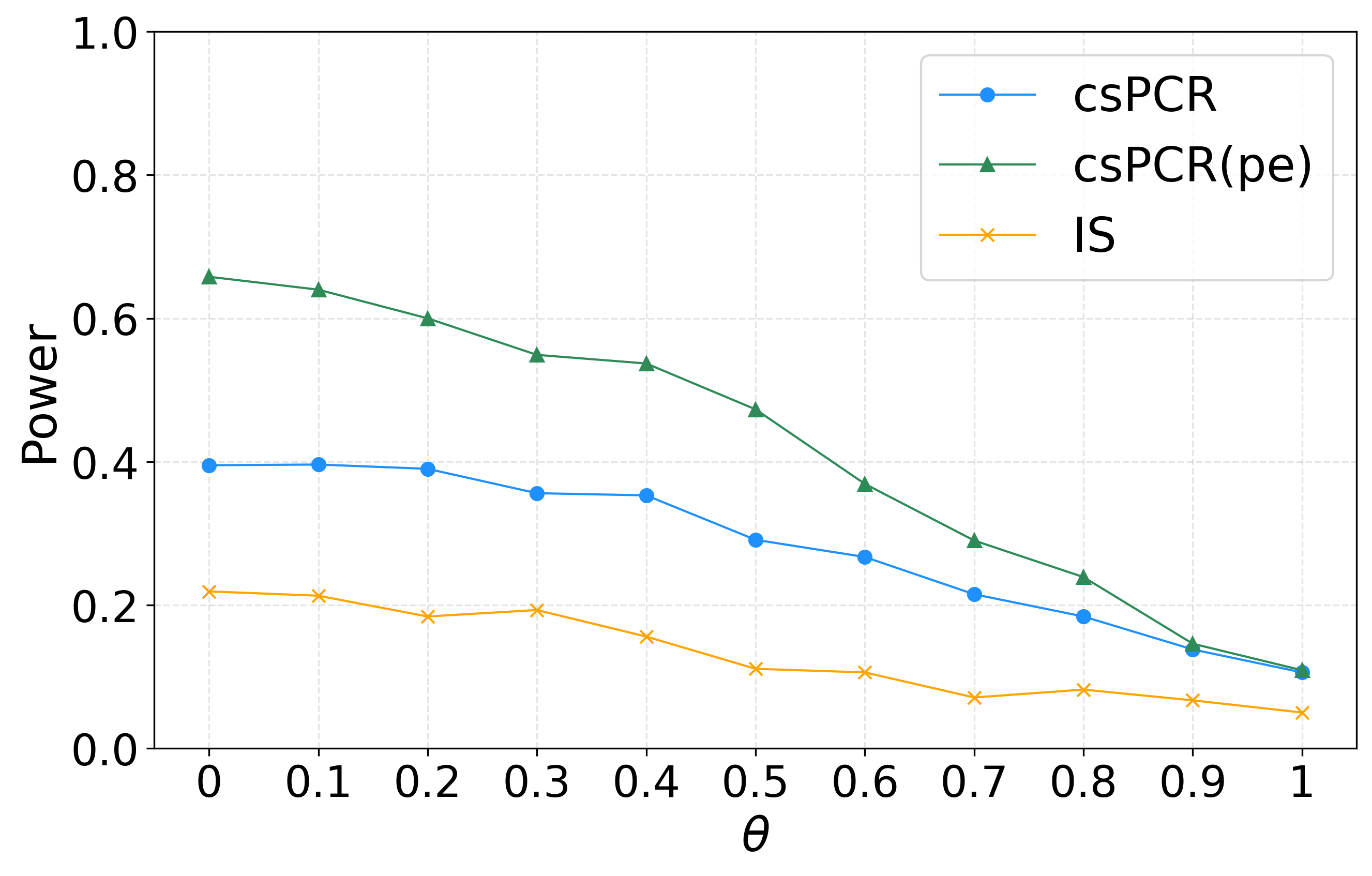}
        \caption{Varying nonlinear effect size $\theta$}
        \label{fig:nonlinEffect}
    \end{subfigure}
    
    \caption{\small{Comparison of statistical power of the three methods as the effect size varies: (a) indirect effect $\beta$, (b) direct effect $\gamma$, and (c) nonlinear effect size $\theta$.}}
\end{figure}

We also test how adding a nonlinear component to the indirect effect affects the power when we assume a linear model of $V \mid Z,X$ in the estimation stage. This can be helpful in assessing the performance of our methods under model misspecification. As Figure~\ref{fig:nonlinEffect} indicates, as the nonlinear effect increases, the power of all three methods decreases, though our methods still significantly exceed the benchmark. Interestingly, we observe that as $\theta \to 1$, i.e., there is a full nonlinear component without a linear component, the advantage of the power-enhanced version over the original csPCR test disappears. This occurs because when the $V \mid X,Z$ model is misspecified and the density ratio estimation is inaccurate, the variance reduction in the control variates step reduces variance in the ``wrong" direction, and thus does not improve the power of the original method.

\section{Real-World Application}
\label{section:real_data}
The COVID-19 pandemic has presented unprecedented challenges to global health systems, with high variability in outcomes based on demographic and clinical characteristics. Early identification of patients at high risk for severe outcomes, such as mortality within 90 days of hospital admission, is crucial for timely and effective treatment interventions. This study leverages extensive hospital data to develop models predicting 90-day mortality following hospital admission due to COVID-19.

For this study, we extract patient data spanning from January 2020 to December 2023 from Duke University Health System (DUHS), focusing on individuals admitted with COVID-19. This period encompasses multiple waves of the pandemic, influenced by various circulating variants. 

Our dataset comprises patient records for a total of $N=3,057$ individuals admitted with COVID-19. The outcome $Y$ is defined as mortality within 90 days since hospital admission due to COVID-19. The treatment variable $X$ is defined as binary, where 1 indicates the administration of any COVID-19 specific medication (explained in Appendix \ref{App: realdata}) and 0 otherwise. The covariates $Z$ include comorbidity indices (renal disease, diabetes without complication, diabetes with complication, local tumor, and metastatic tumor), age, gender, and race, which are critical for adjusting the risk models due to their known influence on COVID-19 outcomes. The length of hospitalization, denoted as $V$, is standardized to follow a standard normal distribution (with a mean of zero and a standard deviation of one), facilitating comparisons and integration into predictive models regardless of original scale or distribution. 

The dataset is segmented into two distinct groups based on the date of hospital admission to align with pivotal changes in virus strain predominance and public health guidelines. The source data comprises COVID-19 admissions prior to November 30, 2021, with a sample size of $N_1=1,131$ patients. The target data includes admissions from November 30, 2021, through December 2023, totaling $N_2=792$ patients. This temporal division allows for the analysis of trends and outcomes associated with the evolving pandemic landscape. Prevalence of the 90-day mortality outcome within the source data is 14.3\%, reflecting the impact of earlier virus strains and treatment protocols, while in the target data, the prevalence is substantially lower at 3.7\%, possibly indicating the effect of improved treatments and vaccines, as well as the influence of different virus variants over time.

\begin{table}[h]
    \centering
    \caption{$p$-values of different methods on COVID-19 dataset}
    \label{tab:model_performance}
    \begin{tabular}{lcccc}
        \toprule
        \textbf{Method} & \textbf{csPCR} & \textbf{csPCR(pe)} & \textbf{IS} \\
        \midrule
        $p$-value & $0.025$ & $0.032$ & $0.663$ \\
        \bottomrule
    \end{tabular}
\end{table}
For the analysis, we divide 50\% of the source data, comprising 565 individuals, alongside the entirety of the target data, to estimate the density ratio. Density ratios of $X, Z$ are estimated using probabilistic classification method \cite{qin1998inferences}, while the density ratio of $V|X,Z$ is determined through Elastic Net regression. For all three methods, the test statistic $T$ is chosen to be $T(\tilde{X}, Z, V, Y) = Y \cdot \tilde{X}$. As indicated in Table \ref{tab:model_performance}, both csPCR and csPCR(pe) give statistically significant results, whereas the IS method does not. The statistically significant results are consistent with biomedical literature. For example, through systematic review and meta-analysis, \cite{zuo2022bamlanivimab} reported that Bamlanivimab is effective in reducing the mortality rates of COVID patients. In a cohort study, \cite{schwartz2023population} also found similar effectiveness for Nirmatrelvir–ritonavir.

These results align with our findings from the simulation study and demonstrate that our method has increased power compared with the benchmark IS method.

\newpage
\bibliographystyle{plainnat}
\bibliography{references}

\newpage
\appendix
\section{Proofs}
\label{appendix:proof}
\subsection{Preliminaries}

Throughout this section, we write $S(x_j, z_j, v_j) = s_j$ as the label assigned to sample $j$ in Algorithms \ref{alg:pcr} and \ref{alg:pcr_power_enhancement}, instead of using $\ell_j$. This notation helps avoid confusion between different label choices.

\begin{proposition}
\label{prop:pcr}    
Assume that the conditional independence $X \indp Y \mid Z$ holds on the target population $\mathcal{T}$. 
Let $e(x_j, z_j, v_j)$ denote the density ratio.
For any integer $\ell \in [1,L]$, the following holds:
\[\mathbb{E}_{\mathcal{S}}[e(x_j, z_j, v_j)\cdot\mathbbm{1}{\{S_{\mathcal{T}}(x_j, z_j, v_j)=\ell\}}]=\frac{1}{L}.\]
\end{proposition}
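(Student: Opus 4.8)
The plan is to reduce the density-ratio-weighted source expectation to an \emph{unweighted} target probability by a change of measure, and then invoke the exchangeability argument underlying the original PCR test. First I would observe that the label $S_{\mathcal{T}}(x_j,z_j,v_j)$ produced by lines 2--6 of Algorithm \ref{alg:pcr} depends on the data only through $(X_j,Y_j,Z_j)$ --- not through $V_j$ --- together with auxiliary randomness $\xi_j=(\widetilde{X}_j^{(1:M)},U_j)$, namely the $M$ counterfeits and the uniform variable used to break ties. Conditionally on $Z_j$, the law of $\xi_j$ is the $M$-fold product of $P_{\mathcal{T}}(X\mid Z_j)$ together with an independent uniform variable, and in particular it does not depend on whether $(X_j,Y_j,Z_j,V_j)$ is drawn from $\mathcal{S}$ or $\mathcal{T}$. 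Thus I can write $\mathbbm{1}\{S_{\mathcal{T}}(x_j,z_j,v_j)=\ell\}=g_\ell(X_j,Y_j,Z_j;\xi_j)$ for a fixed $\{0,1\}$-valued function $g_\ell$.

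Second, by the covariate-shift model \eqref{eqn:joint_distribution} we have $P_{\mathcal{T}}(Y,X,Z,V)=e(X,Z,V)\,P_{\mathcal{S}}(Y,X,Z,V)$, and since $\xi_j$ is generated identically (conditionally on $Z_j$) in both cases, a standard change of measure gives
\[ \mathbb{E}_{\mathcal{S}}\big[e(X_j,Z_j,V_j)\,g_\ell(X_j,Y_j,Z_j;\xi_j)\big]=\mathbb{E}_{\mathcal{T}}\big[g_\ell(X_j,Y_j,Z_j;\xi_j)\big]=\mathbb{P}_{\mathcal{T}}\big(S_{\mathcal{T}}(x_j,z_j,v_j)=\ell\big). \]
Integrability is immediate since $g_\ell$ is bounded and $\mathbb{E}_{\mathcal{S}}[e]=1$ (Assumption \ref{assu:fourth_moment} is far more than enough). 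It therefore suffices to show that the right-hand side equals $1/L$.

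Third, I would establish this by the PCR exchangeability argument, now applied in the target population. Condition on $(Y_j,Z_j)$; by the null hypothesis $X\indp Y\mid Z$ on $\mathcal{T}$, the conditional law of $X_j$ given $(Y_j,Z_j)$ equals $P_{\mathcal{T}}(X\mid Z_j)$, which is exactly the law from which $\widetilde{X}_j^{(1)},\ldots,\widetilde{X}_j^{(M)}$ are drawn i.i.d.\ and independently of $X_j$. Hence $(X_j,\widetilde{X}_j^{(1)},\ldots,\widetilde{X}_j^{(M)})$ is exchangeable given $(Y_j,Z_j)$, and applying the fixed map $t\mapsto T(t,Y_j,Z_j)$ preserves exchangeability, so $(T_j,\widetilde{T}_j^{(1)},\ldots,\widetilde{T}_j^{(M)})$ is exchangeable given $(Y_j,Z_j)$. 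With ties broken uniformly at random --- equivalently, by appending i.i.d.\ uniform keys so that the augmented vector has almost surely distinct coordinates and remains exchangeable --- the rank $R_j$ of $T_j$ among the $M+1$ scores is uniform on $\{1,\ldots,M+1\}$. Since $M+1=KL$ and each block $S_\ell$ has exactly $K$ elements, $\mathbb{P}_{\mathcal{T}}(R_j\in S_\ell\mid Y_j,Z_j)=K/(KL)=1/L$; taking expectations over $(Y_j,Z_j)$ yields $\mathbb{P}_{\mathcal{T}}(S_{\mathcal{T}}=\ell)=1/L$, which completes the proof.

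The main obstacle is conceptual rather than computational: one must verify carefully that the auxiliary randomness in the label-generation step has the same distribution regardless of the sampling population (so that the change of measure in the second step is legitimate), and that the random tie-breaking is handled rigorously --- appending independent uniform keys to the scores is the cleanest device here. Once these two points are set up correctly, the remainder is the already-known fact that exchangeability forces the rank, and hence the label, to be uniform.
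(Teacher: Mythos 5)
Your proof is correct and takes essentially the same route as the paper's: a change of measure via the density ratio $e$ (using $P_{\mathcal{T}}(Y,X,Z,V)=e(X,Z,V)P_{\mathcal{S}}(Y,X,Z,V)$ and the fact that the counterfeit/tie-breaking randomness has the same conditional law given $Z_j$ in either population) reduces the weighted source expectation to $\mathbb{P}_{\mathcal{T}}(S_{\mathcal{T}}=\ell)$, which equals $1/L$ under the null. The only difference is that you prove this last uniformity step explicitly through the exchangeability and rank argument, whereas the paper simply cites the non-covariate-shift PCR result; your added care with the auxiliary randomness and tie-breaking is a refinement of, not a departure from, the paper's argument.
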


\begin{proof}[Proof of Proposition \ref{prop:pcr}]
For simplicity, denote $w_j = e(X_j, Z_j, V_j)$ and $s_j = S_{\mathcal{T}}(X_j, Z_j, V_j)$.

\begin{equation}
\begin{aligned}
    &\mathbb{E}_{\mathcal{S}}[e(X_j, Z_j, V_j) \cdot \mathbbm{1}\{S_{\mathcal{T}}(X_j, Z_j, V_j) = \ell\}]\\
    &\qquad = \mathbb{E}_{\mathcal{S}}\left[\mathbb{E}\left[w_j \cdot \text{P}(s_j = \ell \mid Y_j, Z_j, X_j, V_j)\right] \bigg| Z_j, X_j, V_j\right] \\
    &\qquad= \mathbb{E}_{\mathcal{S}}\left[w_j \cdot \mathbb{E}\left[\text{P}(s_j = \ell \mid Y_j, Z_j, X_j, V_j)\right] \bigg| Z_j, X_j, V_j\right] \\
    &\qquad= \int_{(Z_j, X_j, V_j)} w_j \cdot p_{\mathcal{S}}(Z_j, X_j, V_j) \cdot \mathbb{E}\left[\text{P}(s_j = \ell \mid Y_j, Z_j, X_j, V_j)\right] \, dZ_j \, dX_j \, dV_j \\
    &\qquad= \int_{(Z_j, X_j, V_j)} p_{\mathcal{T}}(Z_j, X_j, V_j) \cdot \mathbb{E}\left[\text{P}(s_j = \ell \mid Y_j, Z_j, X_j, V_j)\right] \, dZ_j \, dX_j \, dV_j \\
    &\qquad= \mathbb{E}_{\mathcal{T}}\left[\mathbb{E}\left[\text{P}(s_j = \ell \mid Y_j, Z_j, X_j, V_j)\right] \bigg| Z_j, X_j, V_j\right] \\
    &\qquad= \mathbb{E}_{\mathcal{T}}\left[\text{P}(s_j = \ell \mid Y_j, Z_j, X_j, V_j)\right] \\
    &\qquad= \frac{1}{L}.
\end{aligned}
\end{equation}
The last equation follows from results in the non-covariate-shift scenario, e.g., from \cite{javanmard2021pearson}.
\end{proof}

\subsection{Proof of Theorem \ref{theo:valid_test}}

\subsubsection{Results for Algorithm \ref{alg:pcr}}

Let $(W_{\ell})_{\ell = 1,\ldots,L}$ be the sum of weights and $\hat{\Omega}_n$ be the sample covariance matrix in Algorithm~\ref{alg:pcr}. By Proposition \ref{prop:pcr}, we have that
\[
\mathbb{E}(W_{\ell}) = n \cdot \mathbb{E}[w_j \cdot \mathbbm{1}\{\ell_j = \ell\}] = \frac{n}{L}.
\]
Note that the $W_{\ell}$'s are sums of i.i.d. random variables, and thus by the Central Limit Theorem, as $n \to \infty$,
\[
\mathbf{A}_n = \sqrt{\frac{L}{n}}\left(W_{1} - \frac{n}{L}, W_{2} - \frac{n}{L}, \ldots, W_{L} - \frac{n}{L}\right) \stackrel{d}{\to} \mathcal{N}_L(0, \Omega),
\]
where for any $\ell, \ell^* \in \cb{1,\dots, L}$
\[
\begin{split}
\Omega_{\ell,\ell^*} &= 
L \text{Cov}(w_1\mathbbm{1}\cb{s_1 = \ell}, w_1\mathbbm{1}\cb{s_1 = \ell^*})
= L\mathbb{E}_{\mathcal{S}}\left[w_1^2 \cdot \mathbbm{1}\{s_1 = \ell\} \mathbbm{1}\{s_1 = \ell^*\}\right] - \frac{1}{L}\\
&= L\mathbb{E}_{\mathcal{S}}\left[w_1^2 \cdot \mathbbm{1}\{s_1 = \ell\} \right] \mathbbm{1}\cb{\ell = \ell^*} - \frac{1}{L}. 
\end{split}
\]
Therefore, 
\[
U_{n, L} = \mathbf{A}_n\trans \mathbf{A}_n \xrightarrow{d} \chi^2_{\Omega}. 
\]

Next, we will focus on the variance estimation part. We will show that $\hat{\Omega}_n \stackrel{p}{\to} \Omega$ as $n \to \infty$. For any $\ell, \ell^* \in \cb{1, \dots, L}$, 
\[
\begin{split}
\hat{\Omega}_{n, \ell, \ell^*} 
&= \mathbbm{1}\cb{\ell = \ell^*}\frac{L}{n}D_l -\frac{1}{L}
= \mathbbm{1}\cb{\ell = \ell^*} \frac{L}{n}\sum\limits_{j=1}^{n}w_j^2\cdot\mathbbm{1}\{\ell_j=\ell\} -\frac{1}{L}\\
&\stackrel{p}{\to}
\mathbbm{1}\cb{\ell = \ell^*} L\mathbb{E}_{\mathcal{S}}\left[w_1^2 \cdot \mathbbm{1}\{s_1 = \ell\} \right] -\frac{1}{L}
= \Omega_{\ell,\ell^*}. 
\end{split}
\]

Up til now, we have that
\[
U_{n, L} = \mathbf{A}_n\trans \mathbf{A}_n \xrightarrow{d} \chi^2_{\Omega}, \quad \textnormal{and} \quad \theta_{\hat{\Omega}_n, \alpha} \stackrel{p}{\to} \theta_{\Omega, \alpha}. 
\]
Therefore, 
\[
\mathbb{P}(\text{Algorithm \ref{alg:pcr} rejects}) = \mathbb{P}(U_{n, L} \geq \theta_{\hat{\Omega}_n, \alpha})
\to \mathbb{P}(\chi^2_{\Omega} \geq \theta_{\Omega, \alpha}) = \alpha. 
\]

\subsubsection{Results for Algorithm \ref{alg:pcr_power_enhancement}}
Let $(\widetilde{W}_{\ell})_{\ell = 1, \ldots, L}$ and $\widetilde{\Omega}_n$ be the sum of weights and the sample covariance matrix in Algorithm~\ref{alg:pcr_power_enhancement}. Let $\hat{\gamma}_{\ell}$ be the estimated coefficient in Algorithm~\ref{alg:pcr_power_enhancement}.

Recall that in \eqref{eqn:optimal_choice}, we have identified the optimal choice of $\gamma_\ell$. We will start by working with this optimal choice and show that the $\hat{\gamma}_\ell$ is close to it. 
Define
\[
\begin{split}
\widecheck{W}_\ell &= \sum_{j=1}^{n} w_j \left(\mathbbm{1}\{\ell_j = \ell\} - \gamma_{\ell} a(X_j, Z_j, V_j)\right) + n \gamma_{\ell} \mathbb{E}_{\mathcal{T}}[a(X, Z, V)],\\
K_{\ell,j}(\gamma) &= w_j \left(\mathbbm{1}\{\ell_j = \ell\} - \gamma a(X_j, Z_j, V_j)\right) + \gamma \mathbb{E}_{\mathcal{T}}[a(X, Z, V)], \textnormal{ and}\\
H_j &= w_j a(X_j, Z_j, V_j) - \EE[\Tsc]{a(X, Z, V)}. 
\end{split}
\]
Therefore, we have $\widecheck{W}_\ell = \sum_j K_{\ell,j}(\gamma_{\ell})$ and $\widetilde{W}_\ell = \sum_j K_{\ell,j}(\hat{\gamma}_{\ell}) = \widecheck{W}_\ell - (\hat{\gamma}_\ell - \gamma_\ell) \sum_j H_j$. 

Note that by \eqref{eqn:same_expectation}, $\mathbb{E}(H_j) = 0$. By Proposition \ref{prop:pcr} and \eqref{eqn:same_expectation}, we have $\mathbb{E}(\widecheck{W}_{\ell}) = \frac{n}{L}$. 
Furthermore, because $\widecheck{W}_\ell$ is a sum of i.i.d. random variables, we have that as $n \to \infty$,
\begin{equation}
\label{eqn:A_check_result}
\widecheck{\mathbf{A}}_n = \sqrt{\frac{L}{n}}\left(\widecheck{W}_{1} - \frac{n}{L}, \widecheck{W}_{2} - \frac{n}{L}, \ldots, \widecheck{W}_{L} - \frac{n}{L}\right) \stackrel{d}{\to} \mathcal{N}_L(0, \widecheck{\Omega}),
\end{equation}
where for each $\ell, \ell^* \in \cb{1, \dots, L}$, 
\[
\widecheck{\Omega}_{\ell, \ell^*} = L \text{Cov}\cb{K_{\ell,j}(\gamma_\ell), K_{\ell^*,j}(\gamma_{\ell^*})}. 
\]
Therefore, 
\[
\widecheck{U}_{n, L} = \widecheck{\mathbf{A}}_n\trans \widecheck{\mathbf{A}}_n \xrightarrow{d} \chi^2_{\widecheck{\Omega}}. 
\]

Next, we will show that the actual statistic $\widetilde{U}_{n,L}$ is close to $\widecheck{U}_{n, L}$, and that the estimated variance matrix is also close to $\widecheck{\Omega}$. 
We start with noting that the estimator $\hat{\gamma}_{\ell}$ from linear regression is close to the optimal choice $\gamma_{\ell}$ defined in \eqref{eqn:optimal_choice}: 
by the Central Limit Theorem, $\hat{\gamma}_{\ell} =  \gamma_{\ell} + \mathcal{O}_p(1/\sqrt{n})$.
And thus
\begin{align*}
\widetilde{W}_\ell &= \sum_{j=1}^{n} w_j \left(\mathbbm{1}\{\ell_j = \ell\} - \hat{\gamma}_{\ell} a(X_j, Z_j, V_j)\right) + n \hat{\gamma}_{\ell} \mathbb{E}_{\mathcal{T}}[a(X, Z, V)] \\
&= \sum_{j=1}^{n} w_j \mathbbm{1}\{\ell_j = \ell\} - \hat{\gamma}_{\ell} \left(\sum_{j=1}^{n} w_j a(X_j, Z_j, V_j) - n \mathbb{E}_{\mathcal{T}}[a(X, Z, V)]\right) \\
&= \sum_{j=1}^{n} w_j \mathbbm{1}\{\ell_j = \ell\} - \gamma_{\ell} \left(\sum_{j=1}^{n} w_j a(X_j, Z_j, V_j) - n \mathbb{E}_{\mathcal{T}}[a(X, Z, V)]\right) + \mathcal{O}_p(1)\\
&= \widecheck{W}_\ell + \mathcal{O}_p(1).
\end{align*}
The second-to-last line is because $\hat{\gamma}_{\ell} =  \gamma_{\ell} + \mathcal{O}_p(1/\sqrt{n})$ and the terms inside the parenthesis, $\sum_j H_j$, is a sum of $n$ independent mean-zero random variables. 

Therefore, together with \eqref{eqn:A_check_result}, by Slusky's Theorem, we have that
\[
\widetilde{\mathbf{A}}_n = \sqrt{\frac{L}{n}}\left(\widetilde{W}_{1} - \frac{n}{L}, \widetilde{W}_{2} - \frac{n}{L}, \ldots, \widetilde{W}_{L} - \frac{n}{L}\right) \stackrel{d}{\to} \mathcal{N}_L(0, \widecheck{\Omega}), 
\]
and thus, 
\[
\widetilde{U}_{n, L} = \widetilde{\mathbf{A}}_n\trans \widetilde{\mathbf{A}}_n \xrightarrow{d} \chi^2_{\widecheck{\Omega}}. 
\]

We will work on sample covariance matrix now. 
Recall that the sample covariance matrix $\widetilde{\Omega}_n = \frac{L}{n}(\mathbf{W}-\frac{1}{L}\cdot \mathbf{1}_{L \times n}) (\mathbf{W}-\frac{1}{L}\cdot \mathbf{1}_{L \times n})\trans$, where
$\mathbf{W}_{\ell,j} = w_j\cdot[\mathbbm{1}{\{\ell_j=\ell\}}-\hat{\gamma}_{\ell} a(X_j, Z_j, V_j)] + \hat{\gamma}_{\ell}\EE[\Tsc]{a(X, Z, V)} = K_{\ell,j}(\hat{\gamma}_{\ell})$. Let's start with $\mathbf{W} \mathbf{W}\trans$. 
For any $\ell, \ell^* \in \cb{1, \dots, L}$,
\[
\begin{split}
 (\mathbf{W} \mathbf{W}\trans)_{\ell, \ell^*}
 &= \sum_j K_{\ell,j}(\hat{\gamma}_{\ell})K_{\ell^*,j}(\hat{\gamma}_{\ell^*})\\
 &= \sum_j \p{K_{\ell,j}(\gamma_{\ell}) - (\hat{\gamma}_{\ell} - \gamma_\ell)H_j } \p{K_{\ell^*,j}(\gamma_{\ell}) -(\hat{\gamma}_{\ell^*} - \gamma_{\ell^*})H_j}\\
 & = \sum_j K_{\ell,j}(\gamma_{\ell})K_{\ell^*,j}(\gamma_{\ell^*})
 - (\hat{\gamma}_{\ell} - \gamma_\ell) \sum_j H_jK_{\ell^*,j}(\gamma_{\ell^*}) - (\hat{\gamma}_{\ell^*} - \gamma_{\ell^*}) \sum_j H_j K_{\ell,j}(\gamma_{\ell})\\
 &\qquad \qquad \qquad \qquad\qquad \qquad+(\hat{\gamma}_{\ell} - \gamma_\ell)(\hat{\gamma}_{\ell^*} - \gamma_{\ell^*}) \sum_jH_j^2\\
 & = \sum_j K_{\ell,j}(\gamma_{\ell})K_{\ell^*,j}(\gamma_{\ell^*}) + \mathcal{O}_p(\sqrt{n})
\end{split}
\]
Therefore, by the law of large numbers, 
\[\frac{L}{n}(\mathbf{W} \mathbf{W}\trans)_{\ell, \ell^*}
= \frac{L}{n}\sum_j K_{\ell,j}(\gamma_{\ell})K_{\ell^*,j}(\gamma_{\ell^*}) + \mathcal{O}_p(1/\sqrt{n})
= L\EE{K_{\ell,1}(\gamma_{\ell})K_{\ell^*,1}(\gamma_{\ell^*})} + \mathcal{O}_p(1/\sqrt{n}). \]
Similarly, for $\mathbf{W} \mathbf{1}\trans$, we have that for any $\ell, \ell^* \in \cb{1, \dots, L}$,
\[
\begin{split}
 (\mathbf{W} \mathbf{1}\trans)_{\ell, \ell^*}
 = \sum_j K_{\ell,j}(\hat{\gamma}_{\ell})
 = \sum_j K_{\ell,j}(\gamma_{\ell}) - (\hat{\gamma}_{\ell} - \gamma_\ell)H_j 
  = \sum_j K_{\ell,j}(\gamma_{\ell}) + \mathcal{O}_p(\sqrt{n}). 
\end{split}
\]
Therefore, again by the law of large numbers, 
\[
\frac{L}{N}(\mathbf{W} \mathbf{1}\trans)_{\ell, \ell^*}
  = \frac{L}{N} \sum_j K_{\ell,j}(\gamma_{\ell}) + \mathcal{O}_p(1/\sqrt{n})
  = L \EE{K_{\ell,j}(\gamma_{\ell})} + \mathcal{O}_p(1/\sqrt{n})= 1 + \mathcal{O}_p(1/\sqrt{n}).
\]
Combining the above results gives, 
\[
\begin{split}
\widetilde{\Omega}_{n,\ell,\ell^*} &= \frac{L}{n} \sqb{(\mathbf{W}-\frac{1}{L}\cdot \mathbf{1}_{L \times n}) (\mathbf{W}-\frac{1}{L}\cdot \mathbf{1}_{L \times n})\trans }_{\ell,\ell^*}\\
& = L\EE{K_{\ell,1}(\gamma_{\ell})K_{\ell^*,1}(\gamma_{\ell^*})} - L \EE{K_{\ell,j}(\gamma_{\ell})}\EE{K_{\ell^*,j}(\gamma_{\ell^*})}+
\mathcal{O}_p(1/\sqrt{n})\\
& = L \Cov{K_{\ell,1}(\gamma_{\ell}), K_{\ell^*,1}(\gamma_{\ell^*})} + \mathcal{O}_p(1/\sqrt{n})\\
& = \widecheck{\Omega}_{\ell, \ell^*} + \mathcal{O}_p(1/\sqrt{n}). 
\end{split}
\]
Therefore, $\widetilde{\Omega}_{n} \stackrel{p}{\to}  \widecheck{\Omega}$. 

To summarize, we have that
\[
\widetilde{U}_{n, L} = \widetilde{\mathbf{A}}_n\trans \widetilde{\mathbf{A}}_n \xrightarrow{d} \chi^2_{\widecheck{\Omega}}, \quad \textnormal{and} \quad \theta_{\widetilde{\Omega}_n, \alpha} \stackrel{p}{\to} \theta_{\widecheck{\Omega}, \alpha}. 
\]
Therefore, 
\[
\mathbb{P}(\text{Algorithm \ref{alg:pcr_power_enhancement} rejects}) = \mathbb{P}(\widetilde{U}_{n, L} \geq \theta_{\widetilde{\Omega}_n, \alpha})
\to \mathbb{P}(\chi^2_{\widecheck{\Omega}} \geq \theta_{\widecheck{\Omega}, \alpha}) = \alpha. 
\]

\subsection{Proof of Theorem \ref{theo:variance_reduction}}

Similar to the proof of Theorem \ref{theo:valid_test}, we 
define
\[
\begin{split}
\widecheck{W}_\ell &= \sum_{j=1}^{n} w_j \left(\mathbbm{1}\{\ell_j = \ell\} - \gamma_{\ell} a(X_j, Z_j, V_j)\right) + n \gamma_{\ell} \mathbb{E}_{\mathcal{T}}[a(X, Z, V)],\\
K_{\ell,j}(\gamma) &= w_j \left(\mathbbm{1}\{\ell_j = \ell\} - \gamma a(X_j, Z_j, V_j)\right) + \gamma \mathbb{E}_{\mathcal{T}}[a(X, Z, V)], \textnormal{ and}\\
H_j &= w_j a(X_j, Z_j, V_j) - \EE[\Tsc]{a(X, Z, V)}. 
\end{split}
\]
Therefore, we have $\widecheck{W}_\ell = \sum_j K_{\ell,j}(\gamma_{\ell})$ and $\widetilde{W}_\ell = \sum_j K_{\ell,j}(\hat{\gamma}_{\ell}) = \widecheck{W}_\ell - (\hat{\gamma}_\ell - \gamma_\ell) \sum_j H_j$. 

We know from the literature that $\gamma_\ell$ is the optimal choice of $\gamma$ and thus $\Var{\widecheck{W}_\ell} \leq \Var{W_\ell}$. We will then move on to show that $\Var{\widetilde{W}_\ell}$ is close to $\Var{\widecheck{W}_\ell}$ and thus asymptotically no greater than $\Var{W_\ell}$. 

To this end, note that
\[
\begin{split}
\Var{\widetilde{W}_\ell}
&= \Var{\widecheck{W}_\ell - (\hat{\gamma}_\ell - \gamma_\ell) \smash{\sum_j H_j}}\\
&= \Var{\widecheck{W}_\ell} + 2\Cov{\widecheck{W}_\ell, (\hat{\gamma}_\ell - \gamma_\ell) \smash{\sum_j H_j}} + \Var{(\hat{\gamma}_\ell - \gamma_\ell) \smash{\sum_j H_j}}\\
&\leq \Var{\widecheck{W}_\ell} + 2\sqrt{\Var{\widecheck{W}_\ell}}\sqrt{\EE{\p{(\hat{\gamma}_\ell - \gamma_\ell) \smash{\sum_j H_j}}^2}} + \EE{\p{(\hat{\gamma}_\ell - \gamma_\ell) \smash{\sum_j H_j}}^2}.
\end{split}
\]
But we also know from the proof of Theorem \ref{theo:valid_test} that $\hat{\gamma}_\ell - \gamma_\ell \stackrel{p}{\to} 0$. Then, because of the bounded fourth moment assumption, by the Dominated Convergence Theorem, we have that
\[\frac{1}{n} \EE{\p{(\hat{\gamma}_\ell - \gamma_\ell) \smash{\sum_j H_j}}^2} \to 0.\]

Therefore, 
\[\limsup_{n \to \infty} \frac{1}{n} \p{{\Var{\widetilde{W}_\ell} - \Var{\widecheck{W}_\ell}} } \leq 0.
\]
Finally, we note that $\Var{W_l} = \Omega(n)$,
and hence 
\[ \limsup_{n \to \infty} (\Var{\widetilde{W}_l}/\Var{W_l}) \leq 1.\] 


\section{Additional Simulation Results}
\subsection{Running time}
\label{subsection:running_time}
All experiments run on a Macbook Pro 2022 M2.

\textbf{Artificial dataset}: Regarding running time for one iteration including density ratio estimation and $X|Z$ model fitting (on average), csPCR took 5.12s, csPCR(pe) took 14.95s, IS method took 1.5s, PCR took 1.25s.

\textbf{Real-world application}: Regarding running time for one test procedure, csPCR took 3.41s, csPCR(pe) took 11.32s, IS method took 0.81s.
\subsection{Finding optimal hyperparameter $L$}
\label{App:Choose L}
We find the optimal $L$ value for the testing algorithm by performing numerical simulations, evaluating its Type-I error control and power. We adopt the same numerical simulation setup as in the main text Section \ref{section:simulation}. We first choose $a_{\Ssc} = 1$ and $a_{\Tsc} = 0$ and also fix $\beta = 1$ to compare the Type-I error rate for different choice of $L$ of the csPCR method. We perform experiments with both true density ratio and estimated density ratio. The results are shown in Table \ref{tab:type_I_error_rates}.

\begin{table}[ht]
\centering
\caption{Type-I Error Rates at Different Levels of L of csPCR Method}
\label{tab:type_I_error_rates}

\begin{tabular}{ccccccc}
\toprule
\textbf{L} & \textbf{2} & \textbf{3} & \textbf{5} & \textbf{10} & \textbf{15} & \textbf{20} \\
\midrule
True Density Ratio & 0.05125 & 0.05000 & 0.04575 & 0.03675 & 0.02825 & 0.02425 \\
Estimated Density Ratio & 0.04620 & 0.05025 & 0.04425 & 0.03905 & 0.02725 & 0.02175 \\
\bottomrule
\end{tabular}

\end{table}

We also test the power of the csPCR and csPCR(pe) method with different choices of $L$ value. We choose $a_{\Ssc} = 0$ and $a_{\Tsc}=2$ and fix $\beta = 2$. 

As Table \ref{tab:type_I_error_rates} and Figure \ref{fig:L-power} shows, as $L$ value increases, the csPCR method become more conservative with more tight Type-I error control and lower power. We can observe that when we set $L=3$, the csPCR method can achieve most stable Type-I error rate control and also highest power empirically. Therefore, in our simulation experiments and real world data experiments, we fix $L=3$.

\begin{figure}[t]
    \centering
    \includegraphics[width=0.5\linewidth]{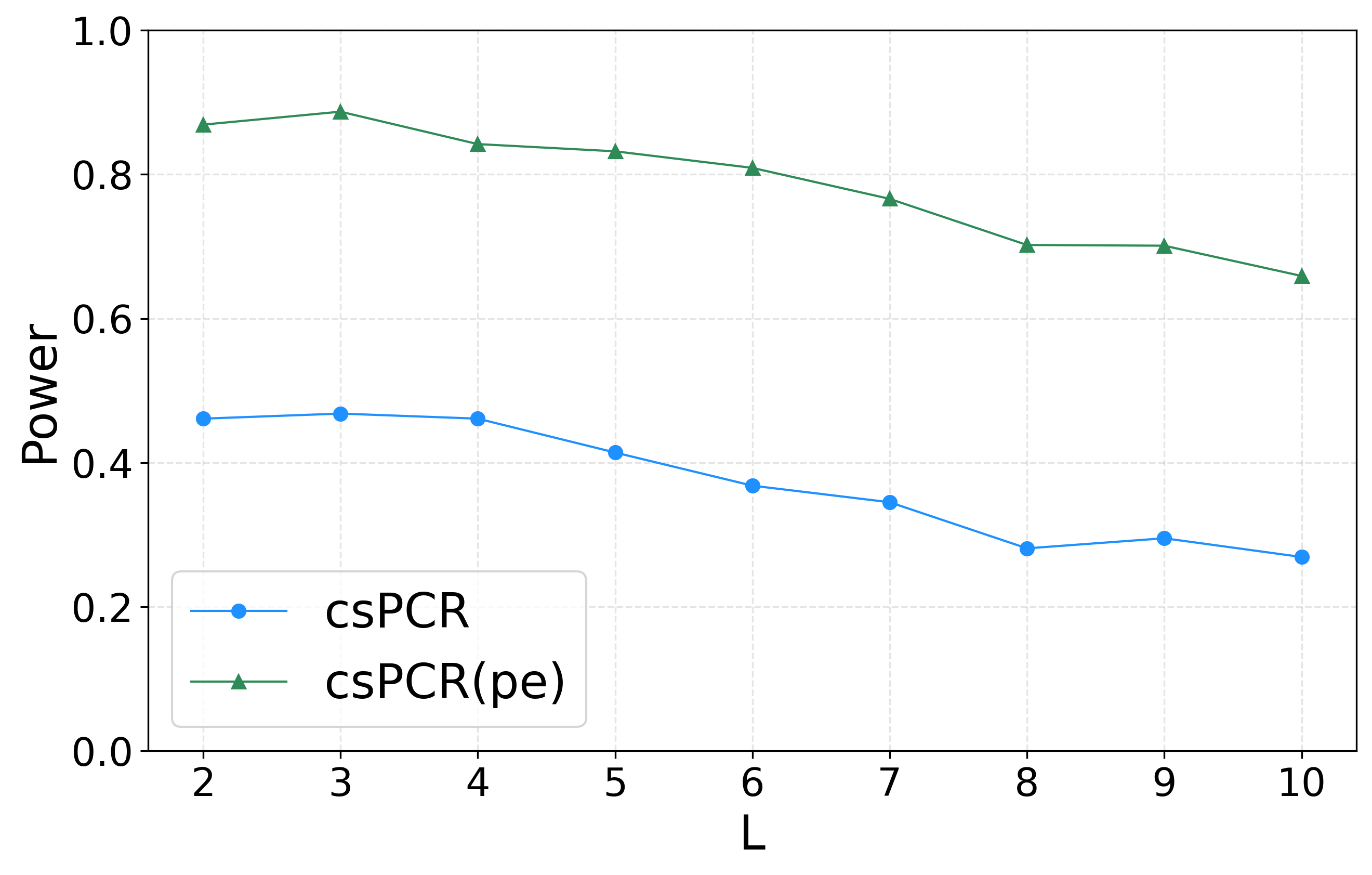}
    \caption{\small Comparison of statistical power of the three methods as the the parameter $L$ varies.}
    \label{fig:L-power}
\end{figure}

\section{Real-World Application }
\label{App: realdata}
The specific medication indicated by the treatment variable $X$ includes Ritonavir, Bamlanivimab, Casirivimab-Imdevimab, Remdesivir, Ritonavir Nirmatrelvir, Sotrovimab, Bamlanivimab Etesevimab. For simplicity, $X=1$ indicates any of these specific medication and $X=0$ otherwise.
\clearpage
\newpage

\end{document}